\documentclass{article}
\usepackage{geometry}
\usepackage{algorithmic}
\usepackage{textcomp}
\usepackage{cite}
\usepackage{graphicx}
\usepackage{amsmath}
\usepackage{amsthm}
\usepackage{amsfonts}
\usepackage{amssymb}
\usepackage{color}
\usepackage[switch]{lineno}
\newtheorem{theorem}{Theorem}

\newtheorem{corollary}{Corollary}

\newtheorem{definition}{Definition}
\newtheorem{example}{Example}
\newtheorem{lemma}{Lemma}
\newtheorem{notation}{Notation}

\newtheorem{remark}{Remark}
\newcommand{\ra}{\rightarrow}
\newcommand{\p}{\partial}

\DeclareMathOperator*\Lie{Lie}
\newcommand{\red}[1]{{\color{red} {#1}}}
\begin{document}

\title{Further Geometric and Lyapunov Characterizations of Incrementally Stable Systems on Finsler Manifolds}
\author{Dongjun~Wu \thanks{
D. Wu is with Laboratoire des
Signaux et Syst\`emes, CNRS-CentraleSup\'elec-Universit\'e Paris Saclay,
91192 Gif-sur-Yvette, France and the Center for Control Theory and Guidance Technology, Harbin
Institute of Technology, Heilongjiang 150001, China and  (e-mail: dongjun.wu@l2s.centralesupelec.fr;
wdjhit@163.com)}, Guang-ren~Duan~ \thanks{
G. Duan is with the Center for Control Theory and Guidance Technology,
Harbin Institute of Technology, Heilongjiang 150001, China (e-mail:
g.r.duan@hit.edu.cn)}}
\date{}
\maketitle

\begin{abstract}
	In this paper, we report several new geometric and Lyapunov
	characterizations of incrementally stable systems on Finsler and
	Riemannian manifolds. A new and intrinsic proof of an important
	theorem in contraction analysis is given via the complete lift of
	the system. Based on this, two Lyapunov characterizations of
	incrementally stable systems are derived, namely, converse
	contraction theorems, and revelation of the connection between
	incremental stability and stability of an equilibrium point, in which
	the second result recovers and extends the classical Krasovskii's
	method.

	{\color{red} A technical mistake in Theorem 4 has been corrected in this
	version compared to the IEEE TAC version \cite{Wu2021}. See the red texts.}

\end{abstract}


\section{Introduction and Motivation}

Incremental stability, also termed as contraction in the
literature, is concerned with the
attractive behaviour of any pairs of trajectories of
a system. This notion dates back to the 1950s and
1960s, known as extreme stability, which was introduced to study the 
stability of periodic orbits of dynamical systems \cite{lasalle1957study,
salle1961stability}; at the time, the problem was usually tackled by transforming it into a set
	stability analysis problem. More precisely, consider the system $\dot{x}
	= F(x)$ and its copy $\dot{y}=F(y)$. Then the incremental stability of
	the system with state $x$ is equivalent to the stability of the set
	$\{x=y\}$ of the augmented system with state $(x,y)$. More recent
	developments in this line of research can be found in
	\cite{angeli2002lyapunov, ruffer2013convergent} and the references
	therein.
	Another widely adopted approach is related to the properties of the
	differential dynamics of the system $\dot{x}=F(x)$. For
	example, the famous Demidovich condition \cite{demidovich1961,
	demidovich1967lectures} --- proposed in the 1960s --- which involves the
	Jacobian of $F$, can be viewed as a Lyapunov condition imposed on the
	differential dynamics $\delta \dot{x} = \frac{\p F}{\p x} \delta x$.
	However, it was not until the 1990s --- when W. Lohmiller and J.
	Slotline published the paper \cite{lohmiller1998contraction}, suggesting
	studying the differential dynamics of the system --- that this research
	direction attracted new attentions in the control community. Since then,
	this point of view has been extensively studied, enriched, and has been
	applied to solve various control problems, such as synchronization \cite
	{slotine2005study}, \cite{wang2005partial}, trajectory tracking \cite
	{pavlov2008incremental}, \cite{reyes2018virtual}, \cite
	{lohmiller2000nonlinear}, observer design \cite{wang2005partial}, \cite
	{jouffroy2010tutorial}, to name a few. 
 An alternative approach to contraction analys is the matrix measure method
\cite {sontag2010contractive, aminzare2014contraction}, which is applicable
to systems defined on normed vector spaces.



Despite the success of contraction analysis in applications, notably those
inspired by the work of Lohmiller and J. Slotline etc., there had always been a
call for Lyapunov or geometric characterizations of contraction analysis, see
for example \cite{angeli2002lyapunov, forni2013differential}.
In \cite {forni2013differential}, F. Forni \textit{et al.} proposed a
differential Lyapunov framework for contraction analysis, in which they
introduced two essential objects, namely, the Finsler structure and the
Finsler-Lyapunov function to derive sufficient conditions for incremental
stability. The advantage of this framework was illustrated by showing that
numerous previous works in the literature could be unified utilizing a single
condition. Nevertheless, there still remain several important issues and
interesting questions of this framework that need to be addressed: 

\begin{itemize}
\item[\textbf{Q1}] Most of the resutls in \cite{forni2013differential} as
well as their proofs are handled in local coordinates. Therefore, the
geometric interpretations of the differential conditions obtained therein
need to be clarified. This leads to the following question: can we
reformulate all the results in a coordinate free way?

\item[\textbf{Q2}] The main theorem in \cite{forni2013differential}
(Theorem \ref{thm: Forni}) gives a sufficient condition for incremental
stability. A natural question is, is it also necessary?  Or in another word, can
we prove the converse theorem?

\item[\textbf{Q3}] When a system has an equilibrium, then incremental
stability implies certain type of stability of the equilibrium. In this case,
what is the connection between incremental stability and stability of the
equilibrium?
\end{itemize}

We provide answers to all the questions above, which form
into the main contributions of this paper: 
\begin{itemize}
	\item We give an intrinsic form condition expressed in the tangent
		bundle, which 
		guarantees the incremental stability of the system. This is
		achieved by studying the behaviour of the complete lift of the
		system. The result easily recovers one of the main results in
		\cite{forni2013differential} and gives new geometric
		insights to it. 
	\item We prove converse theorems of uniform exponential incremental
		stability and uniform asymptotical incremental stability, in
		a coordinate-free way. The results are expressed in the tangent
		bundle --- differential in nature --- involving no copy of the
		original system (cf. \cite{angeli2002lyapunov}).
	\item We show that the relationship between incremental stability and
		stability is linked by the so called Krasovskii's method. More
		precisely, we prove that a Lyapunov function can be directly
		constructed if we already have a Finsler-Lyapunov function at
		hand. Surprisingly, the answer to Q3 is related to Lie bracket.
\end{itemize}



\begin{notation}
We list some of the notations used in this paper. $\mathcal{X}$: Riemannian
manifold; $T_{x}\mathcal{X}$: tangent space at $x$; $\left\langle
v_{x},u_{x}\right\rangle $: Riemannian product of $v_{x},u_{x}\in T_{x}
\mathcal{X}$; $\ell(c)$: length of the curve $c$; $d(x,y)$: Riemannian
distance between $x$ and $y$; $\mathcal{L}_{f}V(t,x)$: timed Lie derivative
of $V(t,x)$ along the flow of $f(x,t)$, see \cite{wu2020intrinsic}; $P_{p}^{q}$: parallel transport from 
$p$ to $q$; $\phi_{\ast}$: push forward of a diffeomorphism $\phi:\mathcal{M}
\rightarrow\mathcal{N}$; $\phi^{\ast}$:\ pull back of a smooth map $\phi:
\mathcal{M}\rightarrow\mathcal{N}$; $B_{x}^{c}$: the open ball with radius $c
$ centered at $x$; $\phi(t;t_{0},x_{0})$: the flow of a system with initial
condition $(t_{0},x_{0})$; $[f,g]$: the Lie bracket of two smooth vector fields; $\pi:T\mathcal{X}\rightarrow\mathcal{X}$: the
projection map from the tangent bundle to its base space; $O(s^{k})$: $k$
order of $s$.
\end{notation}
\section{Preliminaries}

In \cite{forni2013differential}, the authors have shown that a natural
setting for contraction analysis is Finsler geometry. They introduced the
concept Finsler-Lyapunov function (FLF) which is crucial to the
characterization of incremental stability. Given a Finsler structure $F$
on manifold $\mathcal{X}$, a candidate FLF $V(t,x,\delta x)$ is a
non-negative function defined on the tangent bundle satisfying 
\begin{equation*}
c_{1}F(x,\delta x)^{p}\leq V(t,x,\delta x)\leq c_{2}F\left( x,\delta
x\right) ^{p}, 
\end{equation*}
for all $(x,\delta x)\in T\mathcal{X},$ where $c_{1},c_{2}>0$, $p\geq1$.
Consider the nonlinear time varying system
\begin{equation}
\dot{x}=f(x,t),   \label{NLTV no input}
\end{equation}
evolving on the Riemannian manifold $(\mathcal{X},g)$ where $f(x,t)$ is $
\mathcal{C}^{1}$. In the setting of Riemannian manifold, a candidate FLF should
verify the following condition:
\begin{equation}
c_{1}|\delta x|^{p}\leq V(t,x,\delta x)\leq c_{2}|\delta x|^{p}   \label{FLF}
\end{equation}
where $\left\vert \cdot\right\vert $ denotes the induced norm of the
Riemannian metric, i.e. $\left\vert \delta x\right\vert =\sqrt{\left\langle
\delta x,\delta x\right\rangle }$. The Riemannian distance induced by $g$ is
\begin{equation*}
d(x_{1},x_{2})=\inf_{\gamma\in\Gamma(x_{1},x_{2})}\int_{0}^{1}\left\vert
\gamma^{\prime}(s)\right\vert ds 
\end{equation*}
where $\Gamma(x_{1},x_{2})$ is the set of smooth curves joining $x_{1}$ to $
x_{2}$.

In what follows, we introduce the the definitions of local incremental
stability (IS) and extend it to global IS.

\begin{definition}[Local and Global IS]
\label{def: LIS}The system (\ref{NLTV no input}) is called

\begin{enumerate}
\item \emph{uniformly locally incremental stable (ULIS) at }$x$ if there
exits a class $\mathcal{K}$ function $\alpha$ and a positive constant $
\varepsilon$, such that for all $t\geq t_{0}\geq0,$
\begin{equation}
d(\phi(t;t_{0},x_{1}),\phi(t;t_{0},x_{2}))<\alpha(d(x_{1},x_{2})), 
\label{eq: LIS}
\end{equation}
for all $x_{1},x_{2}\in B_{\varepsilon}(x);$ \emph{uniformly incremental
stable (UIS) on }$D$ if (\ref{eq: LIS}) is satisfied for all $x_{1},x_{2}\in
D$; \emph{uniformly globally incremental stable (UGIS) }if the system is UIS
on $\mathcal{X}$;

\item \emph{uniformly locally incremental asymptotically stable (ULIAS) at }$
x$ if it is ULIS at $x$ and there exists a class $\mathcal{KL}$ function
$ \beta(r,s)$, 
and $\varepsilon$ can be chosen such that for all $t\geq t_{0}\geq0,$
\begin{equation}
d(\phi(t;t_{0},x_{1}),\phi(t;t_{0},x_{2}))\leq\beta(d(x_{1},x_{2}),t-t_{0}) 
\label{eq:LIAS}
\end{equation}
for all $x_{1},x_{2}\in B_{\varepsilon}(x);$ \emph{incremental
asymptotically stable (UIAS) on }$D$ if (\ref{eq:LIAS}) is satisfied for all 
$x_{1},x_{2}\in D$ ; \emph{uniformly globally incremental asymptotically
stable (UGIAS) }if the system is UIAS on $\mathcal{X}$;

\item \emph{uniformly locally incremental exponentially stable (ULIES) at }$x
$ if there exists $K\geq1$, $\lambda>0$ and $\varepsilon>0$ such that for
all $t\geq t_{0}\geq0,$
\begin{equation}
d(\phi(t;t_{0},x_{1}),\phi(t;t_{0},x_{2}))\leq
Ke^{-\lambda(t-t_{0})}d(x_{1},x_{2}),   \label{eq:LIES}
\end{equation}
for all $x_{1},x_{2}\in B_{\varepsilon}(x);$ \emph{uniformly incremental
exponentially stable (UIES) on }$D$ if (\ref{eq:LIES}) is satisfied for all $
x_{1},x_{2}\in D$; \emph{uniformly globally incremental \emph{exponentially}
stable (UGIES) }if the system is UIES on $\mathcal{X}$;
\end{enumerate}
\end{definition}

The following theorem is due to F. Forni \textit{et al.}
\cite{forni2013differential}, which provides a sufficient condition ---
expressed in tangent bundle and local coordinates --- for incremental stability.
\begin{theorem}[F. Forni et. al. \protect\cite{forni2013differential}]
\label{thm: Forni}Consider the system (\ref{NLTV no input}) on $\left( 
\mathcal{X},g\right) $, a connected and forward invariant set $C$, and a
function $\alpha:\mathbb{R}_{+}\rightarrow\mathbb{R}_{+}$. Let $V(x,\delta x)
$ be a candidate FLF such that, in coordinate,
\begin{equation}
 \frac{\partial V}{\partial t}+\frac{\partial V}{
\partial x}f+\frac{\partial V}{\partial\delta x}\frac{
\partial f}{\partial x}\delta x  \label{eq: thm Forni}  \leq-\alpha(V)  \notag
\end{equation}
for each $t\in\mathbb{R}$, $x\in C\subset\mathcal{X}$, and $\delta x\in T_{x}
\mathcal{X}$. Then (\ref{NLTV no input}) is

\begin{itemize}
\item[( IS )] UIS on $C$ if $\alpha(s)=0$ for each $s\geq0;$

\item[(IAS)] UIAS on $C$ if $\alpha(s)$ is a class $\mathcal{K}$ function;

\item[(IES)] UIES on $C$ if $\alpha(s)=\lambda s$ for some $\lambda>0$.
\end{itemize}
\end{theorem}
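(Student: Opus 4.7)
The plan is to reduce the three stability claims to a single differential inequality for $V$ along the \emph{prolonged} (complete-lift) flow on $T\mathcal{X}$, and then integrate this pointwise bound along curves connecting the two initial conditions to recover a distance estimate.

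First, I would observe that the left-hand side of the hypothesis is precisely $\mathcal{L}_{\hat f}V(t,x,\delta x)$, where $\hat f(t,x,\delta x)=(f(x,t),\, \tfrac{\partial f}{\partial x}\delta x)$ is the complete lift of $f$ to $T\mathcal{X}$ written in natural bundle coordinates. Consequently, if $(x(t),\delta x(t))$ solves the lifted system, then $w(t):=V(t,x(t),\delta x(t))$ satisfies $\dot w\le -\alpha(w)$. A standard comparison-lemma argument then gives, case by case, (IS) $w(t)\le w(t_0)$; (IAS) $w(t)\le\beta_V(w(t_0),t-t_0)$ for some $\mathcal{KL}$ function $\beta_V$ obtained from $\alpha\in\mathcal{K}$; (IES) $w(t)\le w(t_0)e^{-\lambda(t-t_0)}$.

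Next, I would transfer these pointwise bounds to distance estimates. Fix $x_1,x_2\in C$ and, using connectedness of $C$, pick a piecewise smooth curve $\gamma_0:[0,1]\to C$ joining $x_1$ to $x_2$. Let $\gamma_t(s):=\phi(t;t_0,\gamma_0(s))$; by forward invariance, $\gamma_t([0,1])\subset C$. The variation field $\partial_s\gamma_t(s)$ satisfies, along each trajectory $s\mapsto x$ fixed, the variational equation $\delta\dot x=\tfrac{\partial f}{\partial x}\delta x$, so $(\gamma_t(s),\partial_s\gamma_t(s))$ is an integral curve of the lifted system for every $s$. Applying the comparison bound to each $s$ and then using the Finsler-type sandwich (\ref{FLF}) yields
\begin{equation*}
|\partial_s\gamma_t(s)|\le c_1^{-1/p}\,\sigma\!\bigl(V(t_0,\gamma_0(s),\gamma_0'(s)),\,t-t_0\bigr)^{1/p}\le c_1^{-1/p}\,\sigma\!\bigl(c_2|\gamma_0'(s)|^{p},t-t_0\bigr)^{1/p},
\end{equation*}
where $\sigma$ stands for the appropriate decay bound in each of the three cases. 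Integrating in $s$ gives a length bound $\ell(\gamma_t)\le \Phi(\ell(\gamma_0),t-t_0)$ for a suitable $\Phi$; in the exponential case this collapses to $\ell(\gamma_t)\le (c_2/c_1)^{1/p}e^{-\lambda(t-t_0)/p}\ell(\gamma_0)$. Taking the infimum over admissible curves $\gamma_0\subset C$ and using $d(\phi(t;t_0,x_1),\phi(t;t_0,x_2))\le \ell(\gamma_t)$ delivers (IS), (IAS) and (IES) respectively, with the class $\mathcal{KL}$ function of (IAS) recovered from $\beta_V$ and the constants $c_1,c_2,p$.

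The principal technical obstacle is twofold. First, one has to make the passage from a pointwise (in $s$) decay of $V$ to an \emph{integrated} decay of the length rigorous: when $\alpha$ is merely a $\mathcal{K}$ function, the map $w_0\mapsto \beta_V(w_0,\tau)$ need not be concave, so concluding $\int\beta_V(\cdot)^{1/p}\le \beta_V(\int\cdot)^{1/p}$ requires care — I would handle this by first reducing to curves parametrized proportionally to arc length and then invoking a Gr\"onwall/comparison argument directly for $\ell(\gamma_t)$ rather than for $V$ pointwise, at the cost of reproving the comparison step on the length. Second, one must ensure that the connecting curve $\gamma_0$ can be chosen inside $C$; connectedness suffices for the existence of a piecewise-smooth path, and the distance bound on $C$ is then the intrinsic distance of $C$ majorizing the ambient $d(\cdot,\cdot)$. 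Aside from these points, the argument is a direct differential-geometric translation of the coordinate calculation hidden in the hypothesis.
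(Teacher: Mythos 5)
Your proposal is correct and follows essentially the same route as the paper's proof of its intrinsic version (Theorem \ref{thm: Clift}): identify the hypothesis as decay of $V$ along the complete-lift flow, note that the variation field $\partial_s\phi(t;t_0,\gamma(s))$ is an integral curve of the lifted system, apply the FLF sandwich pointwise in $s$, and integrate to bound the length and hence the distance. You are in fact somewhat more careful than the paper, which only writes out the exponential case and says the other two are ``similar,'' whereas you correctly flag the genuine subtlety in the $\mathcal{K}$-case (passing the $\mathcal{KL}$ bound through the integral in $s$) and the need to keep the connecting curve inside $C$.
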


This paper deals with time varying vector field, so the time varying version
of Lie derivative will be needed. We refer the reader to \cite{wu2020intrinsic} for
its definition.

Several notations of Riemannian geometry will be used in this
paper, such as the geodesic, the exponential map, first variation formula
of arc length and Lipschitz continuous in the Riemannian context etc. These can
be found in \cite{carmo1992riemannian} and \cite{wu2020intrinsic} and the
references therein. We assume that the Riemannian manifolds treated in this
paper are complete, which implies the existence of minimizing geodesic between
any two points on the manifold. Besides, we assume all the geodesic to be
$\mathcal{C }^{1}$. The solutions of (\ref{NLTV no input}) are assumed to be
forward complete.

\section{Complete lift and intrinsic proof of Theorem\label{sec: Clift} 
\protect\ref{thm: Forni}\protect\cite{forni2013differential}}

The main results in \cite{forni2013differential} are essentially local since
the conditions are represented in local coordindates.
This poses the following problem: assume that the manifold $
\mathcal{X}$ is covered by three coordinate charts $U_1, U_2, U_3$,
and $U_1 \cap U_2 = \emptyset$. If we have already known that the system is
UGIAS, then neither $U_1$ nor $U_2$ can be invariant otherwise $\phi(t;t_0,x_1)$
and $\phi(t;t_0,x_2)$ with $x_1 \in U_1$, $x_2 \in U_2$ cannot converge to each
other, contradicting with the UGIAS of the system. Since $U_1$ and $U_2$ are not
invariant sets, Theorem \ref{thm: Forni} cannot be applied on the two sets.
Therefore in order to analyze UGIAS, further analysis will be needed.

To overcome this difficulty, we give an intrinsic proof of Theorem \ref{thm:
Forni} \cite{forni2013differential}. In particular, an intrinsic form
of (\ref{eq: thm Forni}) will be given. The key ingredients we need to
achieve this goal are two concepts from differential geometry: the Lie
transport of a vector and the complete lift of a vector field.

\begin{definition}[Lie Transport\protect\cite{abraham2012manifolds}]
Consider the ordinary differential equation (\ref{NLTV no input}) and its
flow $\phi(s;t,x)$, i.e. the solution to the following Cauchy problem
\begin{equation*}
\frac{d}{dt}\phi(t;s,x)  =f(\phi(t;s,x),t), \;  \phi(s;s,x)  =x
\end{equation*}
for $x\in\mathcal{X}$, $t\geq s\geq0$. The Lie transport of a vector $W\in
T_{x}\mathcal{X}$ along the flow of (\ref{NLTV no input}) is defined as the
push forward of $W$ by $\phi(t;s,x)$ from $T_{x}\mathcal{X}$ to $T_{\phi
(t;s,x)}\mathcal{X}$, i.e. $\text{Lie}(W)(t,s)=\phi(t;s,x)_{\ast}W.$
Thus Lie$(W)$ defines a vector field along the curve $\sigma(t,s)=\phi(t;s,x)
$.
\end{definition}

\begin{definition}[Complete Lift \protect\cite{yano1973tangent}]
\label{def: CLift}Consider the time varying vector field $f(x,t)$. Given a
point $v\in T\mathcal{X}$, let $\sigma(t,s)$ be the integral curve of $f$
with $\sigma(s,s)=\pi(v)$. Let $V(t)$ be the vector field along $\sigma$
obtained by Lie transport of $v$ by $f$. Then $(\sigma,V)$ defines a curve
in $T\mathcal{X}$ through $v$. For every $t\geq s$, the \emph{complete lift}
of $f$ into $TT\mathcal{X}$ is defined at $v$ as the tangent vector to the
curve $(\sigma,V)$ at $t=s$. We denote this vector field by $\tilde{f}(v,t)$
, for $v\in T\mathcal{X}$.
\end{definition}

Complete lift is a term widely used in differential geometry \cite
{yano1973tangent}, \cite{crampin1986applicable}. Its use can also be found
in control theory. A. Schaft \textit{et al.} used this concept to study
prolonged system and differential passivity \cite{cortes2005characterization}
, \cite{van2015geometric}, \cite{van2013differential} in a coordinate free
manner. F. Bullo \textit{et al. }used it to study the linearization of
nonlinear mechanical systems, see \cite{bullo2019geometric} and its online
supplementary materials. Having the two definitions at hand, we can prove
the following coordinate free form of Theorem \ref{thm: Forni}.

\begin{theorem}
\label{thm: Clift}Consider the system (\ref{NLTV no input}), a function $
\alpha:\mathbb{R}_{+}\rightarrow\mathbb{R}_{+}$ and the dynamical system
defined by the complete lift of $f$,
\begin{equation}
\dot{v}=\tilde{f}(v,t),\ v\in T\mathcal{X}.   \label{CLift sys}
\end{equation}
Let $V$ be a candidate FLF satisfying (\ref{FLF}). If
\begin{equation}
\mathcal{L}_{\tilde{f}}V(t,v)\leq-\alpha(V(t,v))   \label{dot V}
\end{equation}
along the system trajectory for $t\in\mathbb{R}_{+}$, $v\in T\mathcal{X}$.
Then (\ref{NLTV no input}) is

\begin{itemize}
\item[( IS )] UGIS if $\alpha(s)=0$ for each $s\geq0$;

\item[(IAS)] UGIAS if $\alpha(s)$ is a class $\mathcal{K}$ function;

\item[(IES)] UGIES if $\alpha(s)=\lambda s$ for some $\lambda>0$.
\end{itemize}
\end{theorem}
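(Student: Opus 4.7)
The plan is to reduce the global statement to a pointwise length-bounding argument by following a family of curves connecting two trajectories and controlling the length of this family via the Finsler--Lyapunov function lifted to the tangent bundle. Specifically, given $x_1,x_2 \in \mathcal{X}$ and $t_0 \geq 0$, I would first join them by a minimizing (or near-minimizing) geodesic $\gamma_0 : [0,1] \to \mathcal{X}$ with $\gamma_0(0)=x_1$, $\gamma_0(1)=x_2$, and consider the deformed family $\gamma_t(s) := \phi(t;t_0,\gamma_0(s))$. The key geometric observation is that the variational vector field $\partial_s \gamma_t(s)$ is obtained by Lie transport of $\gamma_0'(s)$ along the flow of $f$, so the curve $t \mapsto (\gamma_t(s), \partial_s \gamma_t(s))$ in $T\mathcal{X}$ is precisely an integral curve of the complete lift $\tilde f$ starting at $(\gamma_0(s), \gamma_0'(s))$ at $t=t_0$. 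This identification is what makes the hypothesis (\ref{dot V}) available along the family.

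Next I would use (\ref{dot V}) together with the standard comparison lemma for scalar ODEs. For each fixed $s$, setting $w(t) := V(t, \gamma_t(s), \partial_s \gamma_t(s))$, the inequality $\dot w(t) \leq -\alpha(w(t))$ holds in the three cases. In case (IS), $w(t) \leq w(t_0)$; in case (IAS), $w(t) \leq \beta_0(w(t_0), t-t_0)$ for some $\mathcal{KL}$ function $\beta_0$ obtained from $\alpha$; in case (IES), $w(t) \leq e^{-\lambda(t-t_0)} w(t_0)$. The bounds (\ref{FLF}) then convert these into bounds on $|\partial_s \gamma_t(s)|$ in terms of $|\gamma_0'(s)|$ of the form $|\partial_s \gamma_t(s)| \leq \kappa(t-t_0) |\gamma_0'(s)|$ for appropriate $\kappa$. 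Integrating over $s \in [0,1]$ bounds $\ell(\gamma_t)$ and therefore $d(\phi(t;t_0,x_1),\phi(t;t_0,x_2))$, yielding the corresponding class $\mathcal{K}$, $\mathcal{KL}$, or exponential bound required by Definition~\ref{def: LIS}.

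To make contact with Theorem~\ref{thm: Forni} and justify that this really is a reformulation, I would also record the local-coordinate expression of the lift: in a chart, $\tilde f(x,\delta x, t) = f(x,t)\, \partial_x + \frac{\partial f}{\partial x}\delta x\, \partial_{\delta x}$, so
\begin{equation*}
\mathcal{L}_{\tilde f} V(t,x,\delta x) = \frac{\partial V}{\partial t} + \frac{\partial V}{\partial x} f + \frac{\partial V}{\partial \delta x} \frac{\partial f}{\partial x} \delta x.
\end{equation*}
This shows (\ref{dot V}) is exactly the coordinate condition in Theorem~\ref{thm: Forni}, only now imposed globally on $T\mathcal{X}$ via the intrinsic object $\tilde f$, which removes the forward-invariance-of-chart issue raised before the statement.

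The main obstacle I anticipate is the first step: rigorously identifying the variational field $\partial_s \gamma_t(s)$ with the flow of the complete lift, i.e.\ verifying that pushforward under $\phi(t;t_0,\cdot)$ of the tangent vector $\gamma_0'(s)$ coincides with the $TT\mathcal{X}$-trajectory of $\tilde f$ through $(\gamma_0(s),\gamma_0'(s))$. Once this is established (it is essentially the definition of $\tilde f$ together with commutation of $\partial_s$ and $\phi_\ast$), the rest is a clean chain: differential inequality $\Rightarrow$ pointwise bound on the lifted FLF $\Rightarrow$ pointwise bound on $|\partial_s \gamma_t(s)|$ $\Rightarrow$ length bound $\Rightarrow$ distance bound. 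A secondary technical point is the $\mathcal{C}^1$ regularity of the geodesic $\gamma_0$ (assumed in the preliminaries) and of the flow, which are needed to make the pushforward argument and the length estimate legitimate.
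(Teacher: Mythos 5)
Your proposal is correct and follows essentially the same route as the paper's proof: join $x_1,x_2$ by a minimizing geodesic, identify the variational field $\partial_s\phi(t;t_0,\gamma(s))$ with the Lie transport of $\gamma'(s)$ (i.e.\ an integral curve of $\tilde f$), apply the differential inequality (\ref{dot V}) to decay the FLF along that curve, and integrate over $s$ to bound the length and hence the distance. The only cosmetic difference is that you spell out all three cases via the comparison lemma, whereas the paper proves only the exponential case and notes the others are analogous.
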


\begin{proof}
We only prove the third item, since the first two are similar. By Definition 
\ref{def: CLift}, the trajectory of the system (\ref{CLift sys}) started
from $v$ is the Lie transport of the vector $v$ along $\phi(t;t_{0},\pi(v))$
. Given two points $x_{1}$, $x_{2}$, there is a minimizing geodesic curve
$\gamma :[0,1]\rightarrow\mathcal{X}$ joining $x_{1}$ to $x_{2}$. The following
expression defines a curve in $T\mathcal{X}$: 
\begin{equation*}
t\mapsto\left( \phi(t;t_{0},\gamma(s)),\frac{\partial}{\partial s}
\phi(t;t_{0},\gamma(s))\right) \in T\mathcal{X}
\end{equation*}
which is the Lie transport of the vector $\gamma^{\prime}(s)$ along the
curve $\phi(t;t_{0},\gamma(s))$ for $s\in\lbrack0,1]$ and hence is the
solution to (\ref{CLift sys}). By (\ref{dot V}), the FLF decreases
exponentially along the trajectory of (\ref{CLift sys}), i.e.
\begin{equation*}
V(\phi(t;t_{0},\pi(v)),\text{Lie}(v)(t;t_{0}))\leq V(\pi(v),v)e^{-\lambda
(t-t_{0})}, 
\end{equation*}
therefore 
\begin{align*}
	d(\phi(t &;t_{0},x_{1}),\phi(t;t_{0},x_{2})) \leq\int_{0}^{1}\left\vert \frac{\partial}{\partial s}\phi(t;t_{0},
\gamma(s))\right\vert \text{d}s \\
& \leq c_{1}^{-\frac{1}{p}}\int_{0}^{1}V\left( \phi(t;t_{0},\gamma (s)),
\frac{\partial}{\partial s}\phi(t;t_{0},\gamma(s))\right) ^{\frac{1}{p}}
\text{d}s \\
& =c_{1}^{-\frac{1}{p}}\int_{0}^{1}V\left( \phi(t;t_{0},\gamma (s)),\text{Lie
}(\gamma^{\prime}(s))(t;t_{0})\right) ^{\frac{1}{p}}\text{d}s \\
& \leq c_{1}^{-\frac{1}{p}}\int_{0}^{1}c_{2}^{\frac{1}{p}}\left\vert
\gamma^{\prime}(s)\right\vert e^{-\frac{\lambda}{p}(t-t_{0})}\text{d}s \\
& =\left( \frac{c_{2}}{c_{1}}\right) ^{\frac{1}{p}}e^{-\frac{\lambda}{p}
(t-t_{0})}d(x_{1},x_{2})
\end{align*}
This completes the proof.
\end{proof}

\begin{remark}
We remark that the above proof can be easily adapted to prove the other
theorems in \cite{forni2013differential}.
	As will be seen in the following, the complete lift technique will be
	used throughout this paper, in particular, to prove converse theorems
	and reveal the connection between incremental stability and
	Lyapunov stability of an equilibrium.
	Thus we underscore that this section is crucial for the understanding of
	subsequent results.
\end{remark}

To see that Theorem \ref{thm: Clift} is indeed the intrinsic form of
Theorem \ref{thm: Forni} \cite{forni2013differential}, we just need the
following lemma \cite{crampin1986applicable}.

\begin{lemma}
	\label{lem: CLift}Suppose that $T\mathcal{X}$ has local
	coordinates $\{x,v\}$
and $TT\mathcal{X}$ is locally spanned by $\left\{ \frac{\partial}{\partial
x^i}, \frac{\partial}{\partial v_i}\right\}_{i=1, \cdots, n} $, where $n$ is
the dimension of the manifold $\mathcal{X}$. Then in this coordinate, $
\tilde{f}$ reads 
$$\tilde{f}=
\begin{bmatrix}
f \\ 
(\partial f/\partial x)v
\end{bmatrix}
.$$ 
\end{lemma}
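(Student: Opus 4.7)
The plan is to unfold Definition \ref{def: CLift} directly in a chart and read off the two blocks of $\tilde f$ from an ODE satisfied by a one-parameter variation of the flow. Fix $v\in T\mathcal{X}$, let $x=\pi(v)$, and work in a chart around $x$. This chart induces natural coordinates $(x,v)$ on $T\mathcal{X}$, and then natural coordinates $(x,v,\dot x,\dot v)$ on $TT\mathcal{X}$, in which any tangent vector to a curve $t\mapsto (x(t),v(t))$ in $T\mathcal{X}$ is $(x,v,\dot x(t),\dot v(t))$. So the content of the lemma is to identify $\dot x(s)$ and $\dot v(s)$ for the specific curve $(\sigma(t,s),V(t))$ described in Definition \ref{def: CLift}.

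First I would realize the Lie transport $V(t)$ as a variation. Choose a smooth curve $\alpha:(-\varepsilon,\varepsilon)\to\mathcal{X}$ with $\alpha(0)=x$ and $\alpha'(0)=v$, and set
\begin{equation*}
\xi(t,\epsilon)\;=\;\phi(t;s,\alpha(\epsilon)),\qquad V(t)\;=\;\left.\frac{\partial \xi}{\partial \epsilon}\right|_{\epsilon=0}.
\end{equation*}
Because the pushforward $\phi(t;s,\cdot)_{\ast}$ is precisely the action on velocity vectors of curves, this $V(t)$ coincides with the Lie transport of $v$ along $\sigma(t,s)=\xi(t,0)$, so $(\sigma(t,s),V(t))$ is exactly the curve in $T\mathcal{X}$ used in the definition of $\tilde f$.

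Next I would compute the two components of the tangent vector at $t=s$. The $x$-component is immediate: $\dot\sigma(s,s)=f(\sigma(s,s),s)=f(x,s)$. For the $v$-component, differentiate the defining ODE $\partial \xi/\partial t=f(\xi,t)$ with respect to $\epsilon$ at $\epsilon=0$. Interchanging $\partial_t$ and $\partial_\epsilon$ (legitimate because $\phi$ is $\mathcal{C}^1$ and $f$ is $\mathcal{C}^1$) gives
\begin{equation*}
\dot V(t)\;=\;\frac{\partial f}{\partial x}\bigl(\xi(t,0),t\bigr)\,V(t),
\end{equation*}
and evaluating at $t=s$ with $\xi(s,0)=x$ and $V(s)=\alpha'(0)=v$ yields $\dot V(s)=(\partial f/\partial x)(x,s)\,v$. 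Stacking the two blocks in the induced coordinates on $TT\mathcal{X}$ gives the claimed form of $\tilde f$.

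I do not expect a serious obstacle; the only thing to be careful about is the bookkeeping on $TT\mathcal{X}$, i.e.\ that the natural chart there really is $(x,v,\dot x,\dot v)$ with the two velocity slots filled by differentiating the base and fiber coordinates of $T\mathcal{X}$ separately along the curve. Once that is spelled out, the identification of $\tilde f$ with $(f,(\partial f/\partial x)v)$ is just the two displays above. A brief coordinate-change remark could be added to confirm that these coordinates on $TT\mathcal{X}$ transform consistently under a change of chart on $\mathcal{X}$, which is standard and can be cited from \cite{yano1973tangent} or \cite{crampin1986applicable}.
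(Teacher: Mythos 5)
Your argument is correct, and it is the standard derivation: realize the Lie transport as $\partial_\epsilon|_{\epsilon=0}\,\phi(t;s,\alpha(\epsilon))$ and invoke the variational equation for the flow to get the fiber component $(\partial f/\partial x)v$. The paper itself gives no proof of this lemma --- it simply cites \cite{crampin1986applicable} --- so your write-up supplies a justification the paper omits. The only point worth tightening is the phrase about ``interchanging $\partial_t$ and $\partial_\epsilon$'': rather than appealing to symmetry of mixed partials (which nominally wants $\mathcal{C}^2$ data), it is cleaner to cite directly the theorem on differentiability of flows with respect to initial conditions for $\mathcal{C}^1$ vector fields, which states that $\partial\phi/\partial x$ exists and satisfies the variational equation $\dot X=(\partial f/\partial x)X$, $X(s)=I$; this is exactly the fact the paper later uses (Theorem 3.3 of \cite{Hale1980ODE}) in the proof of Theorem 4, and evaluating at $t=s$ gives $\dot V(s)=(\partial f/\partial x)(x,s)v$ immediately.
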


\begin{remark}
In \cite{simpson2014contraction}, the authors gave a coordinate free proof
of a contraction theorem on Riemannian manifold. But we should notice that
there are several differences between our result and theirs. Firstly, in 
\cite{simpson2014contraction}, the function $\left\langle \left\langle
v_{x},v_{x}\right\rangle \right\rangle $ considered in \cite
{simpson2014contraction} is a special case of the more general FLF
considered here. Second, the proof in \cite{simpson2014contraction} relies
on the Levi-Civita connection defined on the Riemannian manifold. In this
paper however, we do not use certain connection on the manifold. Therefore,
our proof can be extended smoothly to Finsler manifold without considering
any connections, this greatly simplifies the analysis. Lastly, using
complete lift to treat this problem is new. The advantage is that it can be
easily modified to prove other results concerning incremental stability.
\end{remark}

\section{Converse Contraction Theorems}

In this section, we prove that the condition proposed by F. Forni
\textit{et al.} to ensure contractive properties is not only sufficient
but also necessary.  That is, if the system has certain incremental stability,
then we should be able to find a FLF. In \cite{angeli2002lyapunov}, D. Angeli
gave a necessary and sufficient conditions of GIAS using the incremental
Lyapunov function, which is a set version of Lyapunov function. In comparison,
what we are going to prove here is a differential version. In
\cite{van2013differential}, V. Andrieu \textit{et al.} proved a converse theorem
for UIES systems defined on $\mathbb{R}^{n}$, see also
\cite{kawano2019contraction} for monotone systems. We postpone the discussion on
the differences between these results after the proof.

In order to streamline the main underlying idea, we assume the system to be
globally stable. Extension to local version is not difficult.
\begin{definition}
\label{def: Lip}A vector field $X$ on $\mathcal{X}$ is said to be globally
Lipschitz continuous on $\mathcal{X}$, if there exists a constant $L>0$ such
that for $p,q\in\mathcal{X}$ and all $\gamma$ geodesic joining $p$ to $q$,
there holds
\begin{equation*}
\left\vert P_{p}^{q}X(p)-X(q)\right\vert \leq Ld(p,q). 
\end{equation*}
\end{definition}
\begin{remark}
	This condition can also be replaced with bound of the covariant
	derivative of $X$, see \cite{wu2020geometric}.
\end{remark}
\begin{theorem}
\label{thm: converse}Consider the system (\ref{NLTV no input}) defined on
Riemannian manifold $(\mathcal{X},g)$ with $f\in\mathcal{C}^{2}$ and global
Lipschitz continuous with constant $L$ in the sense of Definition \ref{def:
Lip}. Then the system is UGIES if and only if there exists a (possibly time
dependent) $\mathcal{C}^{0}$ FLF, $V(t,v):\mathbb{R}_{+}\times T\mathcal{
X\rightarrow}\mathbb{R}$ such that for any $p\geq1$

\begin{enumerate}
\item There exists two positive constants $c_{1},c_{2}$, such that
\begin{equation*}
c_{1}|v|^{p}\leq V(t,v)\leq c_{2}|v|^{p},\ \forall v\in T\mathcal{X}
\end{equation*}
where $|\cdot|$ is the norm induced by the Riemannian metric.

\item The timed Lie derivative of $V(t,v)$ along the system (\ref{CLift sys}
) satisfies 
\begin{equation}
\mathcal{L}_{\tilde{f}}V(t,v)\leq-kV(t,v)   \label{thmeq:1}
\end{equation}
where $\mathcal{L}_{\tilde{f}}$ is the timed Lie derivative along the flow
of $\tilde{f}$.
\end{enumerate}

Additionally, when the map $v\mapsto|v|^{p}:$ $T\mathcal{X}\rightarrow 
\mathbb{R}$ is $\mathcal{C}^{1}$ (for example when $p$ is an even number), $
V(t,v)$ is also $\mathcal{C}^{1}$.
\end{theorem}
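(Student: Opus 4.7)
The plan is to derive the sufficient direction as an immediate corollary of Theorem \ref{thm: Clift}, and to devote the bulk of the argument to the converse: given UGIES, I would construct an FLF by a Massera-style integral along the flow of the complete lift $\tilde{f}$. The key preparatory step is to promote the hypothesis of UGIES to an exponential contraction estimate on the tangent bundle.

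First I would show that UGIES implies
$|\tilde{\phi}(t;t_{0},v)| \leq K e^{-\lambda(t-t_{0})} |v|$
for every $v \in T\mathcal{X}$. Fix $v \in T_{x_{0}}\mathcal{X}$ and let $\gamma(s) = \exp_{x_{0}}(sv)$. By Definition \ref{def: CLift}, the curve $\beta(s) = \phi(t;t_{0},\gamma(s))$ satisfies $\beta'(0) = \tilde{\phi}(t;t_{0},v)$. Combining $|\beta'(0)| = \lim_{s \to 0^{+}} d(\beta(0),\beta(s))/s$ with UGIES and $\lim_{s \to 0^{+}} d(x_{0},\gamma(s))/s = |v|$ yields the claimed bound. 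The global Lipschitz hypothesis on $f$, by the remark following Definition \ref{def: Lip}, bounds the covariant derivative of $f$ and hence supplies the Gronwall-type lower bound $|\tilde{\phi}(t;t_{0},v)| \geq e^{-L(t-t_{0})}|v|$ via the variational equation along trajectories.

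With these two-sided estimates at hand, I would define
$V(t,v) = \int_{0}^{\infty} e^{k\tau} |\tilde{\phi}(t+\tau;t,v)|^{p}\,d\tau,$
for a fixed constant $k$ with $0 < k < p\lambda$. The upper bound $V(t,v) \leq \frac{K^{p}}{p\lambda - k}|v|^{p}$ follows at once from the UGIES-induced decay; the lower bound is obtained by restricting the integral to $[0,1]$ and invoking the Gronwall lower bound, giving $V(t,v) \geq \tfrac{1-e^{-pL}}{pL}|v|^{p}$. The infinitesimal decay is a standard sliding-window computation: the flow identity $\tilde{\phi}(t+s+\tau;t+s,\tilde{\phi}(t+s;t,v)) = \tilde{\phi}(t+s+\tau;t,v)$ turns $V(t+s,\tilde{\phi}(t+s;t,v))$ into a translated integral $e^{-ks}\int_{s}^{\infty} e^{ku}|\tilde{\phi}(t+u;t,v)|^{p}\,du$, and differentiating at $s=0$ yields $\mathcal{L}_{\tilde{f}}V(t,v) = -kV(t,v) - |v|^{p} \leq -kV(t,v)$, which is (\ref{thmeq:1}).

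Continuity of $V$ follows from continuity of the integrand on $\mathbb{R}_{+} \times T\mathcal{X}$ (guaranteed by $f \in \mathcal{C}^{2}$) together with the uniform domination $e^{k\tau}|\tilde{\phi}(t+\tau;t,v)|^{p} \leq K^{p} e^{(k-p\lambda)\tau}|v|^{p}$, which makes the integral converge uniformly on bounded subsets. For the additional $\mathcal{C}^{1}$ claim when $v \mapsto |v|^{p}$ is $\mathcal{C}^{1}$, differentiation under the integral is justified by the same dominated-convergence argument applied to the partial derivatives, using that the variational flow $(x,v) \mapsto \tilde{\phi}(t+\tau;t,v)$ is itself $\mathcal{C}^{1}$. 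The main obstacle will be the first step: the pushforward norm $|\tilde{\phi}|$ is the \emph{speed} of a curve on $\mathcal{X}$, whereas UGIES controls Riemannian \emph{distance}, so a careful limit argument is required to pass from one to the other intrinsically; once this infinitesimal contraction is established, the rest proceeds along standard Massera lines adapted to the complete-lift setting.
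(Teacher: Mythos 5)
Your overall strategy coincides with the paper's: sufficiency is delegated to Theorem \ref{thm: Clift}, and necessity rests on exactly the two estimates the paper isolates as Lemma \ref{lem: upp bd Lie(v)} (UGIES forces $|\mathrm{Lie}(v)(t;t_0)|\le Ke^{-\lambda(t-t_0)}|v|$) and Lemma \ref{lem: low bd of Lie(v)} (the Lipschitz hypothesis forces $|\mathrm{Lie}(v)(t;t_0)|\ge e^{-L(t-t_0)}|v|$), followed by a Massera-type integral along the lifted flow. Your derivation of the upper estimate via $|\beta'(0)|=\lim_{s\to0^+}d(\beta(0),\beta(s))/s$ applied to $\beta(s)=\phi(t;t_0,\exp_{x_0}(sv))$ is a clean, correct substitute for the paper's route through the exponential map and the $O(s^2)$ comparison of Lemma \ref{lem: dist}; the content is the same. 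The one substantive divergence is the choice of window: you take the infinite-horizon weighted integral $V(t,v)=\int_0^\infty e^{k\tau}|\mathrm{Lie}(v)(t+\tau;t)|^p\,d\tau$ with $0<k<p\lambda$, whereas the paper takes the unweighted finite window $V(t,v)=\int_t^{t+\delta}|\mathrm{Lie}(v)(\tau;t)|^p\,d\tau$ with $\delta$ chosen so that $1-K^pe^{-p\lambda\delta}>0$. Your version buys the exact identity $\mathcal{L}_{\tilde f}V=-kV-|v|^p$ with no parameter tuning; the paper's buys regularity for free, since a finite integral of a continuous (resp. $\mathcal{C}^1$) integrand is automatically continuous (resp. $\mathcal{C}^1$).

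That last point is where your proposal has a genuine gap, confined to the final regularity addendum. For the $\mathcal{C}^1$ claim you invoke differentiation under an \emph{infinite} integral, dominated by "the same" bound. Domination works for the fiber derivative (Lie transport is linear in $v$, so $\partial_v$ of the integrand is again controlled by $Ke^{-\lambda\tau}$), but the derivative with respect to the base point and with respect to $t$ involves the second variation of the flow, i.e.\ derivatives of $\partial\phi/\partial x$, which are governed by the second derivatives of $f$. The hypotheses ($f\in\mathcal{C}^2$, globally Lipschitz in the sense of Definition \ref{def: Lip}) bound the first covariant derivative only; they give no integrable-in-$\tau$ domination of the second variation, so dominated convergence cannot be applied to those partials without an additional assumption. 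The continuity ($\mathcal{C}^0$) claim survives, since your bound $e^{k\tau}|\mathrm{Lie}(v)(t+\tau;t)|^p\le K^pe^{(k-p\lambda)\tau}|v|^p$ does give local uniform convergence of the integral. To repair the $\mathcal{C}^1$ statement, either switch to the paper's finite window or add a hypothesis controlling the second variation.
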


To prove the theorem, we need some basic tools from Riemannian geometry and
a few lemmas that we are going to derive. Part of these materials can be
found in our previous work \cite{wu2020geometric}. We remark that, in general,
the FLF $V$ is time dependent, in contrast to the time independent version in 
\cite{forni2013differential}.

The following lemmas are key to the proof.

\begin{lemma}
	\label{lem: esti of Lip}Given that the system (\ref{NLTV no
		input}) is
globally Lipschitz continuous with constant $L$ in the sense of Definition 
\ref{def: Lip}, then there holds the following estimation
\begin{align}
d(x_{1},x_{2})e^{-L(\tau-t)} & \leq d(\phi(\tau;t,x_{1})),\phi(\tau
;t,x_{2}))  \label{esti of Lip} \\
& \leq d(x_{1},x_{2})e^{L(\tau-t)}, \forall\tau\geq t,\ \forall x\in\mathcal{X}.
\end{align}
\end{lemma}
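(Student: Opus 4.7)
The plan is to estimate the rate of change of the length of a variation of flow curves. Fix $t$ and $\tau\ge t$, and let $\gamma:[0,1]\to\mathcal{X}$ be a minimizing geodesic from $x_1$ to $x_2$, so $\int_0^1|\gamma'(s)|\,ds=d(x_1,x_2)$. Form the variation $\sigma(\tau,s)=\phi(\tau;t,\gamma(s))$ and let $W(\tau,s)=\partial_s\sigma(\tau,s)$, which is exactly the Lie transport of $\gamma'(s)$ discussed in the previous section; in particular, $W(t,s)=\gamma'(s)$. The length of the transported curve is $\ell(\tau)=\int_0^1|W(\tau,s)|\,ds$, and since $\phi(\tau;t,\gamma(\cdot))$ joins $\phi(\tau;t,x_1)$ to $\phi(\tau;t,x_2)$ we have the basic inequality $d(\phi(\tau;t,x_1),\phi(\tau;t,x_2))\le \ell(\tau)$.

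Next I would differentiate $|W|^2$ along the flow. Using $\partial_\tau\sigma=f(\sigma,\tau)$ together with the symmetry of the Levi-Civita connection one obtains $\nabla_\tau W=\nabla_W f$. The Lipschitz hypothesis of Definition \ref{def: Lip} is equivalent (via the remark following it and \cite{wu2020geometric}) to the pointwise bound $|\nabla_W f|\le L|W|$, so that
\begin{equation*}
\Bigl|\tfrac{d}{d\tau}|W|^2\Bigr|=2|\langle \nabla_W f,W\rangle|\le 2L|W|^2.
\end{equation*}
A direct application of Gronwall's inequality then gives the two-sided bound $e^{-L(\tau-t)}|W(t,s)|\le |W(\tau,s)|\le e^{L(\tau-t)}|W(t,s)|$ for every $s\in[0,1]$.

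To finish the upper bound, integrate the right inequality in $s$ and use $\ell(t)=d(x_1,x_2)$:
\begin{equation*}
d(\phi(\tau;t,x_1),\phi(\tau;t,x_2))\le \ell(\tau)\le e^{L(\tau-t)}d(x_1,x_2).
\end{equation*}
The lower bound is not obtained by simply integrating the left inequality above, because $\ell(\tau)$ is only an upper bound on the distance between the endpoints. Instead, I would reverse the roles: set $y_i=\phi(\tau;t,x_i)$ and apply the identical argument to the backward flow $\phi(t;\tau,\cdot)$, which is a diffeomorphism whose generator $-f(\cdot,2\tau-\cdot)$ (or, more intrinsically, the reversed Cauchy problem) still has Lipschitz constant $L$. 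Taking a minimizing geodesic from $y_1$ to $y_2$ and running the same variation-of-flow estimate yields $d(x_1,x_2)\le e^{L(\tau-t)}d(y_1,y_2)$, which is exactly the desired lower bound after rearrangement.

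The main technical point to get right is the translation between the Lipschitz condition stated in terms of parallel transport and the covariant-derivative bound $|\nabla_W f|\le L|W|$ used in the Gronwall step; I would invoke \cite{wu2020geometric} for this. A minor subtlety is the application of the first-variation / chain-rule computation at points where $W(\tau,s)=0$, which is harmless since the estimate $\frac{d}{d\tau}|W|^2=2\langle\nabla_W f,W\rangle$ is smooth and the squared-norm formulation avoids differentiating $|W|$ directly.
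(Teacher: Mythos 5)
Your proof is correct: the paper itself defers the proof of this lemma to \cite{wu2020geometric}, and your argument --- transporting a minimizing geodesic by the flow, differentiating $|W|^2$ via $\nabla_\tau W=\nabla_W f$ with the covariant bound $|\nabla_W f|\le L|W|$, applying Gronwall, and then reversing the flow from a minimizing geodesic between the image points to get the lower bound --- is exactly the standard route consistent with the Lie-transport machinery the paper uses elsewhere. You also correctly flag the one genuine subtlety, namely that the lower bound cannot be obtained by merely integrating the pointwise estimate along the forward-transported curve, since its length only bounds the distance between the endpoints from above.
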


\begin{lemma}
\label{lem: dist}Given $\gamma_{1},\gamma_{2}\in\mathcal{C}^{1}(\mathbb{R}
_{+};\mathcal{X})$, where $\mathcal{X}$ is a Riemannian manifold. If 
$\gamma_{1}(0)=\gamma_{2}(0)=x$
and $\gamma_{1}^{\prime}(0)=\gamma_{2}^{\prime}(0)=v$,
then $d(\gamma_{1}(s),\gamma_{2}(s))=O(s^{2})$
when $s$ is sufficiently small.
\end{lemma}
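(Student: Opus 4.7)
The plan is to lift the problem to the tangent space $T_x\mathcal{X}$ via the exponential map, where the estimate reduces to an elementary Taylor bound, and then transport the resulting norm estimate back to the manifold by bounding the length of an explicit connecting curve. Concretely, for $s$ small enough both $\gamma_1(s)$ and $\gamma_2(s)$ lie in a normal neighborhood $U$ of $x$ on which $\exp_x\colon B\to U$ is a diffeomorphism; I would introduce the lifted curves $\tilde\gamma_i(s):=\exp_x^{-1}(\gamma_i(s))\in T_x\mathcal{X}$. Since $\exp_x(0)=x$ and $(d\exp_x)_0=\mathrm{id}$, the hypotheses $\gamma_i(0)=x$ and $\gamma_i'(0)=v$ translate to $\tilde\gamma_i(0)=0$ and $\tilde\gamma_i'(0)=v$.

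Next, I would apply Taylor's theorem at $s=0$ --- using the second-order regularity available in the paper's intended applications, where the relevant curves are integral curves of $\mathcal{C}^2$ vector fields --- to obtain $\tilde\gamma_i(s)=sv+\tfrac12 s^2\tilde\gamma_i''(0)+o(s^2)$, and hence $|\tilde\gamma_1(s)-\tilde\gamma_2(s)|=O(s^2)$ in the norm of $T_x\mathcal{X}$. To convert this into a Riemannian distance bound I would use the explicit connecting curve $\sigma_s(t):=\exp_x\bigl((1-t)\tilde\gamma_1(s)+t\tilde\gamma_2(s)\bigr)$ for $t\in[0,1]$, which joins $\gamma_1(s)$ to $\gamma_2(s)$ inside $U$. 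Its derivative satisfies $|\sigma_s'(t)|\leq M\,|\tilde\gamma_2(s)-\tilde\gamma_1(s)|$ by continuity, hence boundedness, of $d\exp_x$ on a small closed ball around $0\in T_x\mathcal{X}$, so $d(\gamma_1(s),\gamma_2(s))\leq \ell(\sigma_s)=O(s^2)$.

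The main subtle point --- and really the only obstacle --- is the regularity assumption: strict $\mathcal{C}^1$ does not actually suffice for an $O(s^2)$ conclusion, as shown by $\gamma_1\equiv 0$, $\gamma_2(s)=s^{3/2}e$ in $\mathbb{R}^n$ (both $\mathcal{C}^1$ with matching value and derivative at $0$, yet $|\gamma_2(s)-\gamma_1(s)|=s^{3/2}$). The argument therefore implicitly requires a second-order Taylor expansion at $s=0$, which is automatic in every subsequent use of the lemma in the paper, since the relevant curves arise as flows of $\mathcal{C}^2$ vector fields and are thus $\mathcal{C}^2$ in $s$.
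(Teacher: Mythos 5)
Your argument is sound, but note that the paper itself gives no proof of this lemma: it is deferred to the external reference \cite{wu2020geometric}, so there is no in-paper proof to compare against. Your route --- pull both curves back to $T_x\mathcal{X}$ through $\exp_x^{-1}$ on a normal neighborhood, use $(d\exp_x)_0=\mathrm{id}$ to transfer the initial conditions, Taylor-expand the lifted curves, and push the straight segment between the two lifted points forward to obtain a connecting curve whose length is controlled by a bound on $\|d\exp_x\|$ over a compact ball --- is the standard one, and each step (including $d(\gamma_1(s),\gamma_2(s))\leq\ell(\sigma_s)$) is correct.

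Your regularity objection is genuine: under the stated $\mathcal{C}^1$ hypothesis alone the conclusion $d(\gamma_1(s),\gamma_2(s))=O(s^2)$ is false, and your counterexample $\gamma_1\equiv 0$, $\gamma_2(s)=s^{3/2}e$ is valid ($\gamma_2$ is $\mathcal{C}^1$ on $\mathbb{R}_+$ with $\gamma_2'(0)=0$, yet the distance is only $s^{3/2}$). Two refinements are worth recording. First, $\mathcal{C}^1$ \emph{does} give $|\tilde\gamma_1(s)-\tilde\gamma_2(s)|=o(s)$ directly from the definition of the derivative, hence $d(\gamma_1(s),\gamma_2(s))=o(s)$; and the only place the lemma is invoked (the proof of Lemma \ref{lem: upp bd Lie(v)}, where one needs $\kappa(\hat{s})=d(\alpha_1(\hat{s}),\alpha_2(\hat{s}))/\hat{s}\rightarrow 0$ as $\hat{s}\rightarrow 0$) requires only this weaker $o(s)$ estimate, so the downstream argument survives even without strengthening the hypotheses. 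Second, your assertion that the relevant curves are automatically $\mathcal{C}^2$ needs one caveat: $\alpha_2(s)=\phi(t;t_0,\gamma(s))$ is $\mathcal{C}^2$ only if the geodesic $\gamma$ is, and the paper explicitly assumes geodesics to be merely $\mathcal{C}^1$; for the Levi--Civita connection of a smooth metric geodesics are in fact smooth, so this is harmless, but it should be said. In summary, your proof is correct under the added $\mathcal{C}^2$ hypothesis, and your diagnosis is accurate: the lemma should either assume $\mathcal{C}^2$ and keep $O(s^2)$, or keep $\mathcal{C}^1$ and weaken the conclusion to $o(s)$ --- either version suffices for the paper.
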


The proof of the two lemmas can be found in \cite{wu2020geometric}.

\begin{lemma}
\label{lem: upp bd Lie(v)}If the system (\ref{NLTV no input}) is UGIES, i.e.
\begin{equation}
d(\phi(t;t_{0},x_{1}),\phi(t;t_{0},x_{2}))\leq
ke^{-\lambda(t-t_{0})}d(x_{1},x_{2}),   \label{eq: GIES}
\end{equation}
for any $x_{1},x_{2}\in\mathcal{X}$, then the Lie transport of any
trajectory of the system (\ref{NLTV no input}) satisfies 
\begin{equation}
\left\vert \text{Lie}(v)(t,t_{0})\right\vert \leq
ke^{-\lambda(t-t_{0})}|v|,\ \forall v\in T\mathcal{X}.   \label{eq:0}
\end{equation}
\end{lemma}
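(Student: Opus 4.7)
The plan is to express $\mathrm{Lie}(v)(t,t_0)$ as the tangent vector to a one-parameter family of trajectories obtained by perturbing the initial condition along a geodesic, and then to extract the desired bound by passing to the limit in the UGIES inequality.

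Concretely, for a given $v\in T_x\mathcal{X}$, I would choose a geodesic $\gamma:(-\varepsilon,\varepsilon)\to\mathcal{X}$ with $\gamma(0)=x$ and $\gamma'(0)=v$, which is possible thanks to completeness of $\mathcal{X}$. Following the discussion in Section \ref{sec: Clift} (see the construction in the proof of Theorem \ref{thm: Clift}), the curve $s\mapsto\bigl(\phi(t;t_0,\gamma(s)),\partial_s\phi(t;t_0,\gamma(s))\bigr)$ is the integral curve of the complete lift $\tilde{f}$ through $(x,v)$. Consequently, by the very definition of $\mathrm{Lie}(v)(t,t_0)$ as the push-forward of $v$ by the flow, one has
\begin{equation*}
\mathrm{Lie}(v)(t,t_0)=\left.\frac{\partial}{\partial s}\phi(t;t_0,\gamma(s))\right|_{s=0}.
\end{equation*}

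Next I would compare two infinitesimal lengths. Since $\gamma$ is a geodesic with speed $|v|$, for $s$ small one has $d(\gamma(s),x)=s|v|$. On the other hand, the curve $s\mapsto\phi(t;t_0,\gamma(s))$ passes through $\phi(t;t_0,x)$ with tangent vector $\mathrm{Lie}(v)(t,t_0)$, so comparing this curve to the geodesic issuing from $\phi(t;t_0,x)$ with that same initial velocity, Lemma \ref{lem: dist} yields
\begin{equation*}
d\bigl(\phi(t;t_0,\gamma(s)),\phi(t;t_0,x)\bigr)=s\,|\mathrm{Lie}(v)(t,t_0)|+O(s^{2}).
\end{equation*}
Applying the UGIES hypothesis \eqref{eq: GIES} to the left-hand side gives
\begin{equation*}
s\,|\mathrm{Lie}(v)(t,t_0)|+O(s^{2})\leq k e^{-\lambda(t-t_0)}\,s\,|v|,
\end{equation*}
and dividing by $s>0$ and letting $s\downarrow 0$ produces the claimed inequality \eqref{eq:0}.

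Most of the work is conceptual rather than computational; the only potentially delicate step is the first-order Taylor expansion of the Riemannian distance along the perturbed family. This step is what Lemma \ref{lem: dist} is designed to handle: by comparing $s\mapsto\phi(t;t_0,\gamma(s))$ against the geodesic starting at $\phi(t;t_0,x)$ with velocity $\mathrm{Lie}(v)(t,t_0)$, the two curves agree to first order, so their distance is $O(s^{2})$ and can be absorbed into the error term. Hence I do not anticipate any real obstacle beyond carefully invoking Lemma \ref{lem: dist} with these two matched curves, after which the bound follows by a clean one-line limit argument.
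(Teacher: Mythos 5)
Your proposal is correct and follows essentially the same route as the paper's proof: both realize $\mathrm{Lie}(v)(t,t_0)$ as $\left.\partial_s\phi(t;t_0,\gamma(s))\right|_{s=0}$ along a geodesic variation of the initial condition, invoke Lemma \ref{lem: dist} to compare $s\mapsto\phi(t;t_0,\gamma(s))$ with the geodesic issuing from $\phi(t;t_0,x)$ with velocity $\mathrm{Lie}(v)(t,t_0)$ (yielding the $O(s^2)$ error), and then divide the UGIES inequality by $s$ and let $s\to 0^+$. The paper merely phrases this comparison through the exponential map and an explicit triangle inequality rather than as a first-order Taylor expansion of the distance.
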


\begin{proof}
Denote the normalized geodesic joining $x_{1}$ to $x_{2}$ as $\gamma :[0,
\hat{s}]\rightarrow\mathcal{X}$. We have $0\leq\hat{s}=d(x_{1},x_{2})$. Let $
v\in T\mathcal{X}$ with $\pi_{T\mathcal{X}}(v)=x_{1}$ and $v=\gamma
^{\prime}(0)$. Denote $v_{t}=$Lie$(v)(t,t_{0})$, we have

\begin{equation}
\hat{s}\left\vert v_{t}\right\vert =d\left( \exp_{\phi(t;t_{0},x_{1})}\left( 
\hat{s}v_{t}\right) ,\phi\left( t;t_{0},x_{1}\right) \right) ,   \label{eq:1}
\end{equation}
where $\exp_{x}:T\mathcal{X}\rightarrow\mathcal{X}$ is the exponential map.
Since we have assumed that the Riemannian manifold is complete, $\exp_{x}$
is defined on $T\mathcal{X}$ for all $x\in\mathcal{X}$. Using the metric
property of $d$, we have
\begin{align}
 d&\left(  \exp_{\phi(t;t_{0},x_{1})}\left( \hat{s}v_{t}\right) ,\phi\left(
		t;t_{0},x_{1}\right) \right)  \notag \\
	& \leq d\left( \exp_{\phi(t;t_{0},x_{1})}\left( \hat{s}v_{t}\right)
		,\phi(t;t_{0},x_{2})\right)  \notag \\
	& \;\; +d(\phi(t;t_{0},x_{2}),\phi(t;t_{0},x_{1})) \\
	& \leq d\left( \exp_{\phi(t;t_{0},x_{1})}\left( \hat{s}v_{t}\right)
		,\phi(t;t_{0},x_{2})\right) +K\hat{s}e^{-\lambda(t-t_{0})},   \label{eq:2}
\end{align}
where the second inequality holds due to (\ref{eq: GIES}). From (\ref
{eq:1}) and (\ref{eq:2}) we get
\begin{equation}
\left\vert v_{t}\right\vert \leq\frac{d\left(
\exp_{\phi(t;t_{0},x_{1})}\left( \hat{s}v_{t}\right)
,\phi(t;t_{0},x_{2})\right) }{\hat{s}}+Ke^{-\lambda(t-t_{0})}.   \label{eq:3}
\end{equation}
\begin{figure}[ptb]
\centering
\includegraphics[scale = 0.27 ]{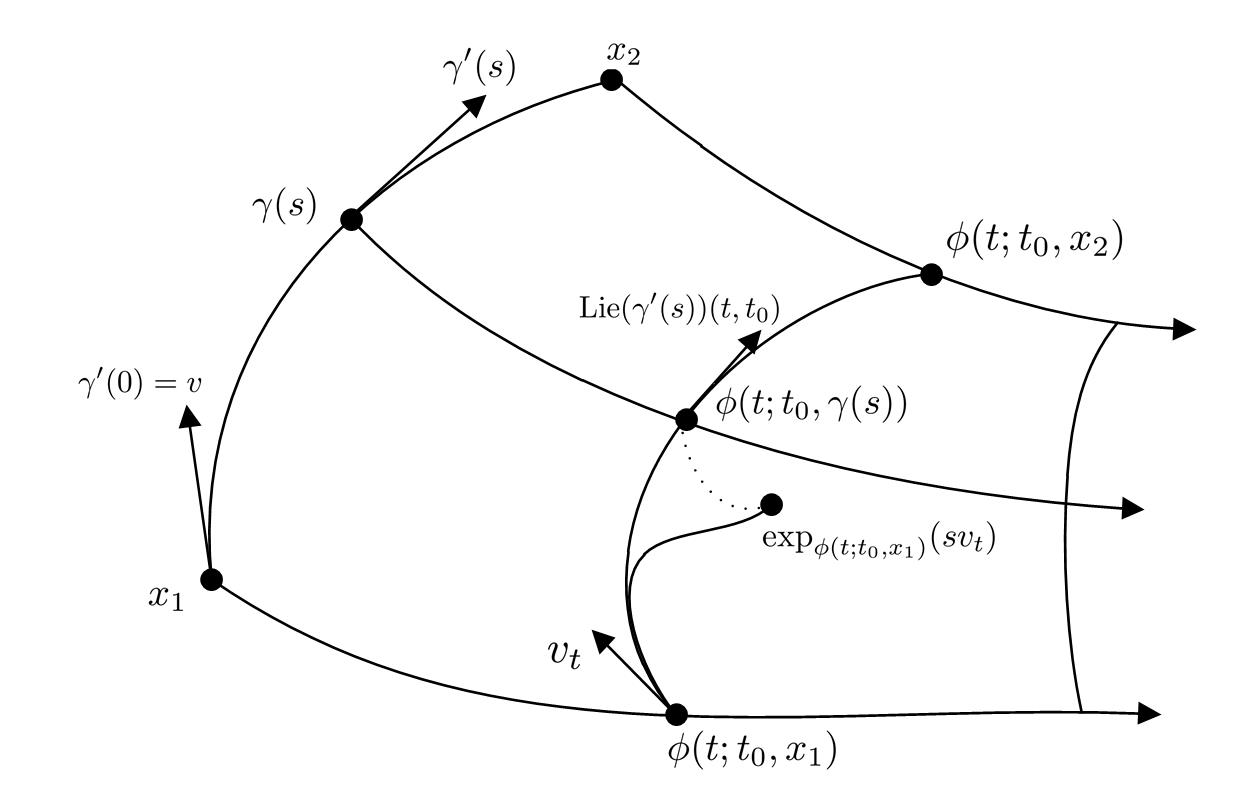}
\caption{Illustration of the proof}
\label{fig: proof}
\end{figure}
See Fig. 1. for an illustration.
Now we want to show that the first term on the right hand side is of order $
O(\hat{s}^{2})$. Since $x_{2}=\gamma(\hat{s})$, the term can also be written
as
\begin{equation*}
\kappa(s)=\frac{d\left( \exp_{\phi(t;t_{0},x_{1})}\left( sv_{t}\right)
,\phi(t;t_{0},\gamma(s))\right) }{s}
\end{equation*}
where we have replaced $\hat{s}$ with $s$. For this, we consider the two
functions $ \alpha_{1}(s)  =\exp_{\phi(t;t_{0},x_{1})}\left( sv_{t}\right)$,
$\alpha_{2}(s)  =\phi(t;t_{0},\gamma(s))$.
We have $\alpha_{1}(0)=\alpha_{2}(0)=x_{1}$ and $\alpha_{1}^{\prime}(0)=
\alpha_{2}^{\prime}(0)=v_{t}.$ Thus
\begin{equation*}
\kappa(s)=d(\alpha_{1}(s),\alpha_{2}(s))/s=O(s) 
\end{equation*}
invoking Lemma \ref{lem: dist}. Now letting $\hat{s}\rightarrow0$ in (
\ref{eq:3}), we obtain (\ref{eq:0}), which completes the proof.
\end{proof}

The lower bound of Lie$(v)(t;t_{0})$ is also needed.

\begin{lemma}
\label{lem: low bd of Lie(v)}Suppose that $f$ in (\ref{NLTV no input}) is
Lipschitz continuous with constant $L$ in the sense of Definition \ref{def:
Lip}, the Lie transport verifies 
$\left\vert \text{Lie}(v)(t;t_{0})\right\vert \geq|v|e^{-L(t-t_{0})},\
\forall v\in T\mathcal{X}.$
\end{lemma}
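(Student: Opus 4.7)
The strategy is to dualize the argument used for Lemma \ref{lem: upp bd Lie(v)}, replacing the UGIES hypothesis by the \emph{lower} Lipschitz estimate in Lemma \ref{lem: esti of Lip} and the direct triangle inequality by its reverse. Fix $v \in T\mathcal{X}$ with $x_1 := \pi(v)$, and let $\gamma(s) = \exp_{x_1}(sv)$ be the geodesic tangent to $v$ at $s=0$; for $s$ inside the injectivity radius, $d(x_1, \gamma(s)) = s|v|$. Write $v_t := \text{Lie}(v)(t;t_0)$ and, just as in the previous lemma, introduce the two auxiliary curves
\begin{equation*}
\alpha_1(s) = \exp_{\phi(t;t_0,x_1)}(s v_t), \qquad \alpha_2(s) = \phi(t;t_0,\gamma(s)).
\end{equation*}
Both start at $\phi(t;t_0,x_1)$, and both have initial velocity $v_t$ at $s=0$: for $\alpha_1$ this is the standard property of the exponential map, and for $\alpha_2$ it is exactly the definition of Lie transport, $\text{Lie}(v)(t;t_0) = \phi(t;t_0,x_1)_\ast v$. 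Lemma \ref{lem: dist} then gives $d(\alpha_1(s), \alpha_2(s)) = O(s^2)$ as $s\to 0^+$.

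The next step is to apply the reverse triangle inequality
\begin{equation*}
d(\alpha_1(s), \phi(t;t_0,x_1)) \geq d(\alpha_2(s), \phi(t;t_0,x_1)) - d(\alpha_1(s), \alpha_2(s)).
\end{equation*}
For $s$ sufficiently small, the left-hand side equals $s|v_t|$, whereas the \emph{lower} bound of Lemma \ref{lem: esti of Lip} applied to the pair $(x_1,\gamma(s))$ yields
\begin{equation*}
d(\phi(t;t_0,\gamma(s)), \phi(t;t_0,x_1)) \geq d(\gamma(s), x_1)\, e^{-L(t-t_0)} = s|v|\, e^{-L(t-t_0)}.
\end{equation*}
Substituting, dividing by $s > 0$, and letting $s \to 0^+$ kills the $O(s)$ remainder and produces the desired inequality $|v_t| \geq |v| e^{-L(t-t_0)}$.

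I do not expect a substantive obstacle, since the argument is the exact structural dual of the upper-bound proof and every analytic ingredient is already in place. The only point deserving minor care is choosing $s$ small enough that it simultaneously (i) lies within the injectivity radius of $\exp$ at $\phi(t;t_0,x_1)$, so that $d(\alpha_1(s), \phi(t;t_0,x_1)) = s|v_t|$ holds exactly, and (ii) validates the $O(s^2)$ remainder of Lemma \ref{lem: dist}. Both are local, pointwise conditions at the fixed base point $\phi(t;t_0,x_1)$ for fixed $t$, so they impose no real constraint on taking the limit $s \to 0^+$. The case $v = 0$ is trivial.
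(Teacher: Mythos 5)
Your proof is correct, but it takes a genuinely different route from the paper's. The paper proves this lemma by a direct computation: it writes the lower Lipschitz estimate as $|v|e^{-L(t-t_0)} \leq d(\phi(t;t_0,x),\phi(t;t_0,\gamma(s)))/s$ and then evaluates the limit of the right-hand side as $s\to 0^+$ via the first variation of arc length, bounding the resulting inner product $\langle \partial_s\phi(t;t_0,\gamma(s)),\bar\gamma'\rangle$ by $|\partial_s\phi|$ with Cauchy--Schwarz, which converges to $|\mathrm{Lie}(v)(t;t_0)|$. You instead replicate the architecture of the upper-bound lemma (Lemma \ref{lem: upp bd Lie(v)}): compare $\alpha_2(s)=\phi(t;t_0,\gamma(s))$ with the radial geodesic $\alpha_1(s)=\exp_{\phi(t;t_0,x_1)}(sv_t)$, invoke Lemma \ref{lem: dist} to get $d(\alpha_1(s),\alpha_2(s))=O(s^2)$, and close with the reverse triangle inequality. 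Both arguments are sound. Your version buys structural symmetry with the companion lemma and avoids differentiating the distance function at the diagonal (where it vanishes and is not smooth), so it sidesteps the slightly delicate identification of $\lim_{s\to0^+}d(\cdot,\cdot)/s$ with a first-variation derivative that the paper performs; the paper's version is shorter but leans on that Riemannian machinery. One small simplification available to you: you do not actually need $s$ inside the injectivity radius for the $\alpha_1$ side, since $d(\alpha_1(s),\phi(t;t_0,x_1))\leq s|v_t|$ holds unconditionally (the radial geodesic has length $s|v_t|$), and that one-sided bound is all your chain of inequalities requires; the injectivity radius is only needed to guarantee $d(x_1,\gamma(s))=s|v|$ on the source side.
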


\begin{proof}
Let $\gamma(s)=\exp_{x}(sv)$, so $\gamma^{\prime}(0)=v$. From Lemma \ref
{lem: esti of Lip}, we have the following inequality for $s>0$:
\begin{equation*}
\frac{d(x,\gamma(s))e^{-L(\tau-t)}}{s}\leq\frac{d(\phi(t;t_{0},x)),\phi
(t;t_{0},\gamma(s)))}{s}, 
\end{equation*}
in which the left hand side is nothing but $|v|e^{-L(t-t_{0})}$. Letting $
s\rightarrow0+$, 
\begin{align*}
 \lim_{s\rightarrow0+} &\frac{d(\phi(t;t_{0},x)),\phi(t;t_{0},\gamma(s)))}{s}
\\
& =\left. \frac{\text{d}}{\text{d}s}\right\vert
_{s=0}d(\phi(t;t_{0},x)),\phi(t;t_{0},\gamma(s))) \\
& =\lim_{s\rightarrow0}\left\langle \frac{\partial\phi(t;t_{0},\gamma (s))}{
\partial s},\bar{\gamma}^{\prime}(\hat{s})\right\rangle \\
& \leq\lim_{s\rightarrow0}\left\vert \frac{\partial\phi(t;t_{0},\gamma (s))}{
\partial s}\right\vert \\
&=\left\vert \text{Lie}(v)(t;t_{0})\right\vert
\end{align*}
where $\bar{\gamma}:[0,\hat{s}]\rightarrow\mathcal{X}$ is the normalized
geodesic joining $\phi(t;t_{0},x)$ to $\phi(t;t_{0},\gamma(s))$. Thus the
proof is completed.
\end{proof}

Now we are in position to prove Theorem \ref{thm: converse}.

\begin{proof}[Proof of Theorem \protect\ref{thm: converse}]
Necessity is already proven in Section \ref{sec: Clift}. 
It remains to prove the converse. 

\textbf{Step 1}:\ We
consider the following candidate FLF:
\begin{equation}
V(t,v)=\int_{t}^{t+\delta}\left\vert \text{Lie}(v)(\tau;t)\right\vert ^{p}
\text{d}\tau   \label{cand FLF}
\end{equation}
where $p\geq1$, $\delta>0$. From Lemma \ref{lem: esti of Lip} and Lemma \ref{lem: upp bd
Lie(v)}, we can estimate the lower and upper bound of $V(t,v)$:
\begin{align*}
V(t,v) & \geq|v|^{p}\int_{t}^{t+\delta}e^{-pL(\tau-t)}\text{d}\tau =\frac{
1-e^{-pL\delta}}{pL}|v|^{p} \\
V(t,v) & \leq|v|^{p}\int_{t}^{t+\delta}e^{-p\lambda(\tau-t)}\text{d}\tau=
\frac{1-e^{-p\lambda\delta}}{p\lambda}|v|^{p}.
\end{align*}
Thus there exists two positive constants $c_{1},c_{2}$ such that
\begin{equation*}
c_{1}|v|^{p}\leq V(t,v)\leq c_{2}|v|^{p}. 
\end{equation*}

\textbf{Step 2}:\ By the property of Lie transport, we know that 
\[
	\Lie(\Lie(v)(t;s))(\tau;t)=\Lie(v)(\tau;s),
\]so 
\begin{align*}
V(s,\text{Lie}(v)(s,t)) & =\int_{s}^{s+\gamma}\left\vert \text{Lie}(\text{Lie
}(v)(s,t))(\tau;s)\right\vert ^{p}\text{d}\tau \\
& =\int_{s}^{s+\delta}\left\vert \text{Lie}(v)(\tau;t)\right\vert ^{p}\text{d
}\tau
\end{align*}
The timed Lie derivative satisfies
\begin{align*}
\mathcal{L}_{\tilde{f}}V(t,v) & =\left. \frac{d}{ds}\right\vert _{s=t}V(s,
\text{Lie}(v)(s,t)) \\
			      & =\left\vert \text{Lie}(v)(t+\delta;t)\right\vert
			      ^{p}-\left\vert v\right\vert ^{p} \\
			& \leq-(1-K^{p}e^{-p\lambda\delta})|v|^{p} \\
			&\leq-\frac{1-K^{p}e^{-p\lambda\delta}}{c_{2}}V(t,v).
\end{align*}
By choosing $\delta$ large enough such that $1-K^{p}e^{-p\lambda\delta}>0,$
we obtain (\ref{thmeq:1}) with $k=(1-K^{p}e^{-p\lambda\delta})/c_{2}$.
\end{proof}

\begin{remark}
As Theorem \ref{thm: Clift}, the above proof can be extended to Finsler
manifold, by replacing the Riemannian metric $g_{ij}$ by $\partial^{2}\left(
F^{2}\right) /\partial x_{i}\partial x_{j}$.
\end{remark}

\begin{remark}
	In \cite{andrieu2016transverse}, the authors obtained similar results of
Theorem \ref{thm: converse} in Euclidean space, see Proposition 1, 2, 3 \cite
{andrieu2016transverse}. More precisely, they proved the
equivalence of TULES-NL, UES-TL and ULMTE defined in
\cite{andrieu2016transverse}. We clarify their differences to our results.
First, in \cite{andrieu2016transverse}, the state space is $\mathbb{ R}^{n}$
with a metric described by positive definite matrices. Compared to Finsler
manifolds, it is easier to deal with and excludes some interesting examples, see
for example \cite{forni2013differential}. In contrast, Finsler structure is the
key object in this paper, it is more general and admits more complex structures.
More importantly, it helps us single out what are the more essential conditions
needed to guarantee contraction properties. For example, in
\cite{andrieu2016transverse}, it is required that the second order partial
derivatives of $f$ are uniformly bounded. On Finsler manifold, this condition is
no longer sufficient; instead, conditions imposed on the covariant derivative is
needed. See Definition \ref{def: Lip}. Second, the Lyapunov function constructed
in \cite {andrieu2016transverse} is quite different from the FLF constructed in
(\ref {cand FLF}). Thirdly, the proof of Theorem \ref{thm: converse} can be easily
extended to prove converse theorems of UGIAS systems.
\end{remark}

\begin{theorem}
\label{thm: converse UGIAS}Consider the system (\ref{NLTV no input}) defined
on Riemannian manifold $(\mathcal{X},g)$ with $f\in\mathcal{C}^{2}$ and
global Lipschitz continuous with constant $L$ in the sense of Definition \ref
{def: Lip}. If the system is UGIAS, and that the function $\beta(r,s)$ in
(\ref{eq:LIAS}) can be chosen such that \red{$\lim_{h \to 0+} \frac{\beta(h,t)}{h} =
\beta'(0,t)$ holds uniformly in 
$t\ge0$,} then there exists a (possibly time dependent) $\mathcal{C}^{0}$ FLF,
$V(t,v):\mathbb{R}_{+}\times T\mathcal{ X\rightarrow}\mathbb{R}$ such that:

\begin{enumerate}
\item There exist two class $\mathcal{K}_\infty$ functions $\gamma_{i}$, $i=1,2,$
such that
\begin{equation*}
\gamma_{1}\left( |y|\right) \leq V(t,v)\leq\gamma_{2}\left( |y|\right) ,\
\forall v\in T\mathcal{X}
\end{equation*}
where $|\cdot|$ is the norm induced by the Riemannian metric.

\item The timed Lie derivative of $V(t,v)$ along the system (\ref{CLift sys}
) satisfies
\begin{equation*}
\mathcal{L}_{\tilde{f}}V(t,v)\leq-\gamma_{3}(V(t,v)) 
\end{equation*}
for some class $\mathcal{K}_\infty$ functions $\gamma_{3}$.
\end{enumerate}
\end{theorem}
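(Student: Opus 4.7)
The plan is to adapt the construction in the proof of Theorem~\ref{thm: converse} to the weaker UGIAS hypothesis, replacing the exponentially weighted window by a Massera--Sontag type kernel. The architecture has three steps: (i) promote UGIAS of the nonlinear flow to a $\mathcal{KL}$ estimate on the Lie transport by combining the geodesic argument of Lemma~\ref{lem: upp bd Lie(v)} with the differentiability of $\beta$ at $r=0$; (ii) use a Sontag-type $\mathcal{K}_\infty$ reparametrization to build a candidate FLF as a weighted integral of the Lie transport along the trajectory; (iii) verify the sandwich bound and the decrease via the cocycle property $\text{Lie}(\text{Lie}(v)(s;t))(s+\tau;s)=\text{Lie}(v)(s+\tau;t)$, exactly as in Step~2 of the proof of Theorem~\ref{thm: converse}.

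For step (i), I reproduce the normalized geodesic argument of Lemma~\ref{lem: upp bd Lie(v)} verbatim up to the inequality
\begin{equation*}
|v_t|\leq \frac{d\bigl(\exp_{\phi(t;t_0,x_1)}(\hat s\, v_t),\,\phi(t;t_0,x_2)\bigr)}{\hat s}+\frac{\beta(\hat s,\,t-t_0)}{\hat s}.
\end{equation*}
Lemma~\ref{lem: dist} bounds the first term by $O(\hat s)$; the second term admits a limit as $\hat s \to 0+$ precisely because $\beta$ is assumed differentiable at $r=0$, yielding $k(t-t_0):=\partial_r\beta(0,t-t_0)$. Monotonicity of $\beta$ in $s$ passes to $k$, so $k$ is non-increasing. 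By linearity of the pushforward on tangent spaces, $|\text{Lie}(v)(t;t_0)|\leq k(t-t_0)\,|v|$ for every $v\in T\mathcal{X}$. A small auxiliary lemma is then needed to conclude that $k(s)\to 0$ as $s\to\infty$, so that $\tilde\beta(r,s):=k(s)\,r$ is a genuine class $\mathcal{KL}$ function; this is where the differentiability hypothesis interacts non-trivially with the $\mathcal{KL}$ structure and will be the main obstacle (see below).

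With $\tilde\beta\in\mathcal{KL}$ in hand, I invoke Sontag's lemma on $\mathcal{KL}$ functions to obtain $\alpha_1,\alpha_2\in\mathcal{K}_\infty$ satisfying $\alpha_1(\tilde\beta(r,s))\leq\alpha_2(r)e^{-s}$, and define
\begin{equation*}
V(t,v):=\int_0^{\infty}\alpha_1\bigl(|\text{Lie}(v)(t+s;t)|\bigr)\,\mathrm{d}s.
\end{equation*}
The upper bound $V(t,v)\leq\alpha_2(|v|)$ is immediate from Sontag's inequality; for the lower bound, Lemma~\ref{lem: low bd of Lie(v)} gives $|\text{Lie}(v)(t+s;t)|\geq e^{-Ls}|v|$, so truncating the integral to $s\in[0,1]$ yields $V(t,v)\geq\alpha_1(e^{-L}|v|)$. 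Hence property~(1) holds with $\gamma_1(r)=\alpha_1(e^{-L}r)$ and $\gamma_2(r)=\alpha_2(r)$, both class $\mathcal{K}_\infty$.

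Finally, the cocycle identity rewrites $V(s,\text{Lie}(v)(s;t))=\int_0^{\infty}\alpha_1(|\text{Lie}(v)(s+\tau;t)|)\,\mathrm{d}\tau$, and differentiating in $s$ at $s=t$ via a change of variables gives $\mathcal{L}_{\tilde f}V(t,v)=-\alpha_1(|v|)$, in exact analogy with the UGIES proof. Combining with $|v|\geq\alpha_2^{-1}(V(t,v))$ produces $\mathcal{L}_{\tilde f}V(t,v)\leq-\gamma_3(V(t,v))$ with $\gamma_3:=\alpha_1\circ\alpha_2^{-1}\in\mathcal{K}_\infty$, giving property~(2). The hard part of the plan is the vanishing of $k$ at infinity: the $\mathcal{KL}$ decay of the nonlinear flow does not immediately transfer to a pointwise linear rate on the Lie transport, since only the slope at $r=0$ is controlled. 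Once this transfer is secured, the remainder of the argument follows the same blueprint as Theorem~\ref{thm: converse}.
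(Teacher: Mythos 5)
Your blueprint is the same as the paper's: extract a bound of the form $|\mathrm{Lie}(v)(t;t_0)|\le k(t-t_0)|v|$ with $k(s)=\partial_r\beta(0,s)$, set $\tilde\beta(r,s)=k(s)r$, apply Sontag's lemma to get $\alpha_1(\tilde\beta(r,s))\le\alpha_2(r)e^{-s}$, define $V(t,v)=\int_0^\infty\alpha_1(|\mathrm{Lie}(v)(t+s;t)|)\,\mathrm{d}s$, bound it below via Lemma~\ref{lem: low bd of Lie(v)}, and differentiate through the cocycle identity to get $\mathcal{L}_{\tilde f}V=-\alpha_1(|v|)$. Two of your choices are actually improvements on the printed proof: you stay on the manifold (adapting the geodesic argument of Lemma~\ref{lem: upp bd Lie(v)} rather than retreating to Euclidean coordinates), and you take $\gamma_3=\alpha_1\circ\alpha_2^{-1}$, which is the correct composition --- the decrease $-\alpha_1(|v|)\le-\gamma_3(V)$ must be funneled through the \emph{upper} bound $V\le\alpha_2(|v|)$, whereas the paper writes $\alpha_1\circ\alpha_3^{-1}$ with the lower bound, a slip.

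The step you leave open, however, is a genuine gap and not a removable technicality. For Sontag's lemma to apply (and for the integral defining $V$ to converge at all), $\tilde\beta(r,s)=k(s)r$ must be class $\mathcal{KL}$, which requires $k(s)\to0$ as $s\to\infty$. The hypotheses only give you that $k$ is non-increasing (inherited from monotonicity of $\beta(r,\cdot)$), and monotone non-increase does not imply decay: the class $\mathcal{KL}$ function $\beta(r,s)=\tfrac{r}{1+s}+\min\{r,\tfrac{1}{1+s}\}$ is differentiable at $r=0$ with $\partial_r\beta(0,s)=1+\tfrac{1}{1+s}\ge1$ for all $s$. Worse, the obstruction is not an artifact of a poorly chosen $\beta$: for a globally Lipschitz modification of $\dot x=-x^3$ near the origin, the sharp incremental bound behaves like $r/\sqrt{1+2r^2s}$, whose slope at $r=0$ is identically $1$, and the linearized flow at $x=0$ is the identity --- so $|\mathrm{Lie}(v)|$ genuinely does not decay and your $V$ diverges there. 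You correctly identified this as ``the main obstacle,'' but it cannot be closed from the stated hypotheses alone; an additional assumption (e.g.\ $\partial_r\beta(0,s)\to0$, or a local exponential estimate) is needed. For what it is worth, the paper's own proof makes exactly the same leap: it asserts that $\tilde\beta$ is class $\mathcal{KL}$ on the strength of $s\mapsto\beta'(0,s)$ being decreasing, which establishes monotonicity but not convergence to zero. So your proposal reproduces the paper's argument faithfully, including its one unsound step --- the difference is that you flagged it instead of asserting it.
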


\begin{proof}
To streamline the proof, we prove the theorem in Euclidean space.
Thanks to Lemma \ref{lem: CLift}, in Euclidean space, the complete
lifted system is
\begin{equation} \label{euc:sys-clift}
\dot{x}    =f(x,t), \; \dot{y}    =\frac{\partial f(x,t)}{\partial x}y.
\end{equation}
Denote the solution with initial time $t$ and initial state $(x,y)$ as
$(\phi_{X}^{T}(\tau;t,x),\phi_{Y}^{T}(\tau;t,x,y))^{T}$. It is well known that
$\frac{\partial\phi_{X}}{\partial x}(\tau;t,x)$ satisfies the matrix ODE,
\[
\dot{X}=\frac{\partial f(\tau,\phi_{X}(\tau;t,x))}{\partial x}X,\ X(t)=I
\]
i.e. $ 
\frac{\partial\phi_{X}(\tau;t,x)}{\partial x}=\Phi(\tau,t)$, where
$\Phi(\tau,t)$ is the transition matrix corresponding to
$\frac{dz}{d\tau}=\frac{\partial
f(\tau,\phi_{X}(\tau;t,x))}{\partial x}z$ and hence 
\[
	\phi_{Y}(\tau;t,x,y) = 
	\frac{\partial\phi_{X}(\tau;t,x)}{\partial x} y.
\]
By assumption, 
\begin{equation} \label{euc:UGIAS}
|\phi_{X}(\tau;t,x_{1})-\phi_{X}(\tau;t,x_{2})|\leq\beta(|x_{1}-x_{2}
|,\tau-t),
\end{equation}
for all $x_{1},x_{2}\in\mathbb{R}^{n}$ and $\tau\geq t$ and a class
$\mathcal{KL}$ function $\beta$. 
We have the following estimations:
\begin{align*}
	\left|\frac{\partial \phi_X(\tau; t, x)}{\partial x_i}\right|
	&= \left|\lim_{h\ra
	0+}\frac{\phi_{X}(\tau;t,x+he_{i})-\phi_{X}(\tau;t,x)} {h} \right| \\
	& = \lim_{h\rightarrow 0+ }  \frac{|\phi_{X}(\tau;t,x+he_{i})-\phi_{X}(\tau;t,x)|}
{h} \\
	&\le \lim_{h\ra 0+} \frac{\beta(h,\tau-t)}{h} \\
	& = \beta'(0,\tau-t)
\end{align*} where $\beta'$ is the right-derivative of $\beta$ with respect to
the first argument, $e_i$ is the $i$-th component of the standard basis of
$\mathbb{R}^n$ and in the second equality we have used the fact that the
Euclidean norm is a continuous function so that $|\lim_{x\ra y} g(x)| = \lim_{x
\ra y} |g(x)|$.
Hence, there holds
\[
\left\vert \frac{\partial\phi_{X}(\tau;t,x)}{\partial x}\right\vert \leq
c\beta'(0,\tau-t)
\]
for all $x \in\mathbb{R}^{n}$ and $\tau\geq t$, where $c$ is a positive
constant. Consequently
\[
	|\phi_{Y}(\tau;t,x,y)| = \left\vert
	\frac{\partial\phi_{X}(\tau;t,x)}{\partial x} y\right\vert  \leq
	c\beta'(0,\tau-t)|y|,\ \forall
x,y\in\mathbb{R}^{n}.
\]

Let $\tilde{\beta}(r,s):=c\beta'(0,s)r$,
then $\tilde{\beta}$ is a class $\mathcal{KL}$ function since 1) $r \mapsto c
\beta'(0,s)r$ is class $\mathcal{K}$ and 2) $s \mapsto c\beta'(0,s)r$ is
decreasing for $\beta'$ is the derivative with respect to the first argument and
$\beta(r, \cdot)$ is decreasing for fixed $r$, \red{since by assumption
\begin{equation*}
\lim_{s\rightarrow\infty}\beta^{\prime}(0,s)=\lim_{s\rightarrow\infty}%
\lim_{h\rightarrow0+}\frac{\beta(h,s)}{h}=\lim_{h\rightarrow0+}\lim
_{s\rightarrow\infty}\frac{\beta(h,s)}{h}=0.
\end{equation*}}Now Proposition 7 \cite{sontag1998comments} implies the
existence of two class $\mathcal{K}_{\infty}$ functions $\alpha_{1},\alpha
_{2}$ such that
\[
\alpha_{1}\left(  \tilde{\beta}(r,s)\right)  \leq\alpha_{2}(r)e^{-s}.
\]
Define the candidate FLF as
\[
V(t,x,y)=\int_{t}^{\infty}\alpha_{1}\left(  |\phi_{Y}(\tau;t,x,y)|\right)
\text{d}\tau,
\]
which has the $\alpha_2(|y|)$ as upper bound:
\[
V(t,x,y)\leq\int_{t}^{\infty}\alpha_{2}(|y|)e^{-(\tau-t)}\text{d}
\tau=\alpha_{2}(|y|),
\] so $V$ is well-defined. It also has the lower bound (see \textbf{Step 1} in
the proof of Theorem \ref{thm: Clift}) :
\begin{align*}
	V(t,x,y) & \geq\int_{t}^{\infty}\alpha_{1}\left(  |y|e^{-L(\tau-t)}\right)
	\text{d}\tau \\
		 & =\int_{0}^{\infty}\alpha_{1}\left(  |y|e^{-Ls}\right)
		 \text{d}s \\
		 &:=\alpha_{3}(\left\vert y\right\vert ),
\end{align*}
where $\alpha_3$ is class $\mathcal{K}_\infty$ since
$\alpha_1$ is class $\mathcal{K}_\infty$. More precisely,
\begin{align*}
	\int_0^\infty \alpha_1(|y|e^{-Ls}) ds
	& \ge \int_0^1 \alpha_1(|y|e^{-Ls})ds  \\
	&\ge \int_0^1\alpha_1(|y|e^{-L})ds\\
	&= \alpha_1(|y|e^{-L}) \ra \infty
\end{align*} as $|y|\ra \infty$. Additionally, $\alpha_3$ being class
$\mathcal{K}$ is obvious. Similar to {\bf Step 2} above, we can calculate the
Lie derivative of $V(t,x,y)$ along the complete lift system
(\ref{euc:sys-clift}) as follows.
\begin{align*}
	\mathcal{L}_{\tilde{f}} &V(t, x,y) 
	= \left. \frac{\p}{\p \tau} \right|_{\tau = t}V(\tau; \phi_X(\tau;t,x),
		\phi_Y(\tau;t,x,y))\\
	&= \left. \frac{\p}{\p \tau} \right|_{\tau = t} \int_\tau^\infty
	\alpha_1(|\phi_Y(s;\tau,\phi_X(\tau;t,x), \phi_Y(\tau;t,x,y))|)ds \\
	&= \left. \frac{\p}{\p \tau} \right|_{\tau = t} \int_\tau^\infty 
	\alpha_1(|\phi_Y(s;t,x,y)|) ds \\
	&= -\alpha_1(|y|)
\end{align*}	
Summarizing,
\[
	\mathcal{L}_{\tilde{f}} V = -\alpha_1 (|y|) \le \alpha_1(
	\alpha_3^{-1}(V)), 
\] hence $V$ is indeed a FLF. Now letting $\gamma_1 = \alpha_3, \gamma_2 =
\alpha_2$ and $\gamma_3 = \alpha_1 \circ \alpha_3^{-1}$ will finish the proof.
\end{proof}

\begin{remark}
	The technical assumption of the differentiability of $\beta$ at
	$(0,s)$ is not very restrictive. It excludes only the case
	when the graph of $\beta(r,*)$ is tangent to the vertical axis at the origin.
	\red{However, uniformly differentiability of
	$\beta(\cdot, s)$ at the origin is an indeed strong assumption which
	makes this theorem less interesting compared to the integral form proved
	in \cite[Theorem 1]{angeli2002lyapunov}}.
	We do not know whether a smooth $\mathcal{KL}$
	function can be constructed when the system is UGIAS, even when
	the system is smooth. 
\end{remark}

\section{Rediscovery and Extension of Krasovskii's method}

When a UGIES system has $x_{\ast}$ as an equilibrium point, i.e. $
f(x_{\ast },t)=0$. It is obvious that the system is exponentially stable. The
converse Lyapunov theorem (see e.g. \cite{khalil2002nonlinear}) tells us
that there should exist a Lyapunov function $W(t,x)$ (not a FLF) for the
system (\ref{NLTV no input}) along which, the time derivative of the
Lyapunov function is negative definite. Now, having the UGIES property at
hand, by Theorem \ref{thm: converse}, a FLF can be constructed. A natural
question is, can we construct a Lyapunov function based on the information
of this FLF? The following proposition gives a rather interesting answer. 
As we will see, it is a rediscovery and extension of the
classical Krasovskii's method used for the construction of Lyapunov function 
\cite{khalil2002nonlinear}.

\begin{theorem}
\label{thm: kraso}Suppose the system $\dot{x}=f(x,t)$ is UGIES with a FLF $
V(t,v)$ with respect to the system (\ref{NLTV no input}) and have an
equilibrium point $x_{\ast}$. Then the system is UGES. Given a smooth time
invariant vector field $h$ on $\mathcal{X}$. If $[f,h]=0$, and 
\begin{equation*}
k_{1}d(x,x_{\ast})^{q}\leq\left\vert h(x)\right\vert \leq k_{2}d(x,x_{\ast
})^{q}
\end{equation*}
for $q\geq1$, then the function $W(t,x)=V(t,h)$ (or $V(t,x,h(x))$) is a
Lyapunov function for the system.
\end{theorem}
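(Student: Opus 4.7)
The plan is to split the statement into two independent parts and handle each with what the paper has already built up. First, I will derive UGES as an immediate corollary of UGIES by substituting $x_2 = x_*$ into the UGIES estimate $d(\phi(t;t_0,x_1),\phi(t;t_0,x_2))\le K e^{-\lambda(t-t_0)} d(x_1,x_2)$; since $f(x_*,t)=0$ forces $\phi(t;t_0,x_*)=x_*$ for all $t\ge t_0$, the inequality collapses to $d(\phi(t;t_0,x_1),x_*)\le K e^{-\lambda(t-t_0)} d(x_1,x_*)$, which is UGES about $x_*$.

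The conceptual heart of the result is the assertion that $W(t,x)=V(t,x,h(x))$ is a bona fide Lyapunov function. The bracket condition $[f,h]=0$ is exactly the classical statement that $h$ is invariant under Lie transport by $f$, i.e. $\phi(t;t_0,\cdot)_*\, h(x) = h(\phi(t;t_0,x))$ for every $x\in\mathcal{X}$ and every $t\ge t_0$ (this is the standard fact that two vector fields commute iff their flows commute, applied here to push $h$ forward by the flow of $f$). Unwinding the definitions, this says that the curve
\begin{equation*}
t\mapsto \bigl(\phi(t;t_0,x),\, h(\phi(t;t_0,x))\bigr)\in T\mathcal{X}
\end{equation*}
is a trajectory of the complete lift $\dot v=\tilde f(v,t)$ starting from $(x,h(x))$. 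This is the key bridge between the flow of $f$ on $\mathcal{X}$ and the flow of $\tilde f$ on $T\mathcal{X}$ on which the FLF $V$ lives.

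With this identification in hand, the Lyapunov analysis is almost mechanical. Differentiating $W$ along trajectories of $\dot x=f(x,t)$ and invoking the complete-lift identification,
\begin{equation*}
\mathcal{L}_f W(t,x) \;=\; \left.\tfrac{d}{d\tau}\right|_{\tau=t} V\bigl(\tau,\phi(\tau;t,x),h(\phi(\tau;t,x)))\bigr) \;=\; \mathcal{L}_{\tilde f} V(t,x,h(x)),
\end{equation*}
and the FLF inequality from Theorem \ref{thm: Clift} (or Theorem \ref{thm: converse}) yields $\mathcal{L}_f W(t,x)\le -k W(t,x)$. The two-sided bounds on $W$ follow by sandwiching: combining $c_1|h(x)|^p\le V(t,x,h(x))\le c_2|h(x)|^p$ with $k_1 d(x,x_*)^q\le |h(x)|\le k_2 d(x,x_*)^q$ gives
\begin{equation*}
c_1 k_1^p\, d(x,x_*)^{pq} \;\le\; W(t,x) \;\le\; c_2 k_2^p\, d(x,x_*)^{pq},
\end{equation*}
so $W$ is positive definite, radially unbounded and satisfies exponential decay, which is the textbook Lyapunov certificate for UGES.

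The main obstacle, and the only nontrivial point, is rigorously justifying the complete-lift trajectory identification from $[f,h]=0$, especially given that $f$ is time-varying while $h$ is time-invariant. I would verify it by differentiating $\Lie(h(x))(t;t_0)$ along the flow of $f$ and recognizing that the defect from $h(\phi(t;t_0,x))$ satisfies a linear ODE driven by the Lie bracket $[f,h]$, which vanishes by hypothesis, forcing the two vector fields along $\phi(\cdot;t_0,x)$ to coincide (they agree at $t=t_0$ and obey the same ODE). Once this is locked down, every other step is just a reinterpretation of the results proved earlier in the paper.
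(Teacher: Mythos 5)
Your proposal is correct and follows essentially the same route as the paper: the identification of $t\mapsto(\phi(t;t_0,x),h(\phi(t;t_0,x)))$ as a trajectory of the complete lift via $[f,h]=0$ is exactly the paper's auxiliary lemma (proved there by observing that $\phi^{\ast}h$ is constant along the flow, i.e.\ $h(\phi(t;t_0,x))=\mathrm{Lie}(h(x))(t,t_0)$), after which $\mathcal{L}_f W=\mathcal{L}_{\tilde f}V(t,h(x))\le -kW$ is the same computation. Your explicit sandwich bounds $c_1k_1^p\,d(x,x_*)^{pq}\le W\le c_2k_2^p\,d(x,x_*)^{pq}$ and the substitution $x_2=x_*$ for UGES merely spell out steps the paper leaves as ``readily checked.''
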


We need the following lemma to prove the theorem, which is interesting in
its own right.

\begin{lemma}
	Consider the system $\dot{x}=f(x,t)$ to which the solution is
	denoted as $
\phi(t;t_{0},x_{0})$. If there exists a vector field $h(x)$ on $\mathcal{X}$
such that $[f,h]=0$, then
$h(\phi(t;t_{0},x_{0}))\in T_{\phi(t;t_{0},x_{0})}\mathcal{X},\ \forall t\geq
t_{0}$
is the unique solution to the system (\ref{CLift sys}) with initial
condition $(t_{0};x_{0},h(x_{0}))$. In particular, the solution to (\ref
{NLTV no input}) started from $(x_{0},f(t_{0},x_{0}))$ is col$(\phi
(t;t_{0},x_{0}),f(\phi(t;t_{0},x_{0}),t_{0}))$.
\end{lemma}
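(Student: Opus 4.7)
The plan is to exploit the geometric content of $[f,h]=0$: a vanishing Lie bracket means precisely that $h$ is invariant under the flow of $f$, so that the pushforward of $h(x_{0})$ under $\phi(t;t_{0},\cdot)$ coincides with $h$ evaluated at the flowed point. Once this invariance is in hand, the lemma is an immediate reading of Definition \ref{def: CLift}: the Lie transport of any vector is by construction the unique solution of (\ref{CLift sys}).

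First I would establish the invariance identity
\[
\phi(t;t_{0},\cdot)_{\ast}\,h(x_{0}) \;=\; h\bigl(\phi(t;t_{0},x_{0})\bigr), \qquad \forall\, t\geq t_{0},
\]
by differentiating $t\mapsto\phi(t;t_{0},\cdot)^{\ast}h$ and invoking the identity $\tfrac{d}{dt}\,\phi(t;t_{0},\cdot)^{\ast}h=\phi(t;t_{0},\cdot)^{\ast}[f(\cdot,t),h]$; the right-hand side vanishes identically by assumption, so $\phi(t;t_{0},\cdot)^{\ast}h\equiv h$ and the pushforward form above follows by inverting the pullback. Next I would specialize the definition of $\tilde{f}$: by Definition \ref{def: CLift}, for every $v\in T_{x_{0}}\mathcal{X}$ the curve $t\mapsto\bigl(\phi(t;t_{0},x_{0}),\,\phi(t;t_{0},\cdot)_{\ast}v\bigr)$ is the integral curve of $\tilde{f}$ through $(x_{0},v)$. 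Choosing $v=h(x_{0})$ and substituting the invariance identity gives exactly $t\mapsto\bigl(\phi(t;t_{0},x_{0}),\,h(\phi(t;t_{0},x_{0}))\bigr)$ as a solution of (\ref{CLift sys}); uniqueness is standard ODE uniqueness applied to $\tilde{f}$, which is $\mathcal{C}^{1}$ whenever $f$ is $\mathcal{C}^{2}$ by the coordinate formula of Lemma \ref{lem: CLift}.

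The main obstacle is the time-dependent character of $f$, since the classical commutativity-implies-invariance statement is most commonly stated for autonomous pairs of vector fields. A clean way around this is to augment the state space to $\mathbb{R}\times\mathcal{X}$ and regard $f(x,t)$ as the autonomous field $\partial_{t}+f$, extending $h$ trivially by zero in the $t$-direction; then $[\partial_{t}+f,\,h]$ reduces to the original $[f,h]$ at each time and the autonomous result applies verbatim. Finally, the last sentence of the lemma is obtained by specializing to $h:=f(\cdot,t_{0})$: in the autonomous case one has $[f,f]=0$, so the previous paragraph immediately yields $\mathrm{col}\bigl(\phi(t;t_{0},x_{0}),\,f(\phi(t;t_{0},x_{0}),t_{0})\bigr)$ as the unique solution of (\ref{CLift sys}) through $(x_{0},f(x_{0},t_{0}))$.
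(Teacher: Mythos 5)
Your proposal is correct and follows essentially the same route as the paper's proof: differentiate the pullback $\phi^{\ast}h$ along the flow, use $[f,h]=0$ to conclude it is constant, identify the resulting pushforward with the Lie transport of $h(x_{0})$, and read off from Definition \ref{def: CLift} that this is the integral curve of $\tilde{f}$. Your added care about the time-dependence of $f$ (augmenting with $\partial_{t}$) and about restricting the final claim $[f,f]=0$ to the autonomous case is a welcome tightening of details the paper passes over, but it does not constitute a different argument.
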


\begin{proof}
The Lie bracket of $f$ and $h$ can be calculated as
\begin{equation*}
\lbrack f,h](\phi(t;t_{0},x_{0}))=\left. \frac{\text{d}(\phi^{\ast}h)}{\text{
d}s}\right\vert _{s=t}(\phi(t;t_{0},x_{0}))=0. 
\end{equation*}
Thus $
\phi^{\ast}h(\phi(t;t_{0},x_{0}))=\text{constant}=(x_{0},h(x_{0})), $
or
\begin{equation*}
h(\phi(t;t_{0},x_{0}))  =\phi(t;t_{0},x_{0})_{\ast}(x_{0},h(x_{0}))
=\text{Lie}(h(x_{0}))(t,t_{0}).
\end{equation*}
which completes the first half of the lemma. Since $[f,f]=0$ is always true,
the last claim also follows.
\end{proof}

\begin{proof}[Proof of Theorem \protect\ref{thm: kraso}]
It can be readily checked that $W(t,x)$ is a positive definite Lyapunov
candidate. Using the above lemma, we have
\begin{align*}
\mathcal{L}_{f}W(t,x) & =
 \left. \frac{\text{d}}{\text{d}\tau}\right\vert _{\tau=t}V(\tau
,h(\phi(\tau;t,x)))  \\
		      & =\mathcal{L}_{\tilde{f}}V(t,h(x)) \\
& \leq-kV(t,h(x)) \\
& =-kW(t,x),
\end{align*}
showing that $W(t,x)$ is indeed a Lyapunov function.
\end{proof}

\begin{corollary}
\label{cor: kraso}Consider the system $\dot{x}=f(x,t)$, where $x\in \mathbb{R
}^{n}$, with $f(0,t)=0$. If the system is ULIES with a FLF $V(t,x,\delta x)$.
 Assume that there exists a smooth vector field $h(x)$ on $\mathbb{R}^{n}$
such that $[f,h]=0$, where $h=0$ if and only if $x=0$, then the function $
W(t,x)=V(t,x,h(x))$ is a Lyapunov function such that the system is
exponentially stable. In particular, $W(t,x)$ can be chosen as $V(t,x,f(x))$
when the system is time invariant.
\end{corollary}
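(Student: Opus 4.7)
The plan is to reduce the corollary to a local restatement of Theorem \ref{thm: kraso} and to confirm that the classical Krasovskii choice $h=f$ is admissible when the system is time-invariant. The key tool is the lemma preceding Theorem \ref{thm: kraso}: since $[f,h]=0$, the curve $(\phi(t;t_0,x_0), h(\phi(t;t_0,x_0)))$ is precisely the solution of the complete-lifted system (\ref{CLift sys}) starting at $(x_0, h(x_0))$. In other words, the vector field $h$ evaluated along trajectories of $f$ automatically encodes the Lie transport that drives the FLF, so the FLF decay along $\tilde f$ can be transcribed directly as a Lyapunov decay of $W(t,x)=V(t,x,h(x))$ along $f$.

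First, I would run the same one-line computation as in the proof of Theorem \ref{thm: kraso}: using the lemma and the ULIES hypothesis (which, via the local version of Theorem \ref{thm: Clift}, yields $\mathcal{L}_{\tilde f}V(t,v)\leq -kV(t,v)$ on a neighbourhood of the zero section of $T\mathbb{R}^n$), one obtains
\begin{equation*}
\mathcal{L}_f W(t,x) = \left.\frac{d}{d\tau}\right|_{\tau=t} V\bigl(\tau,\phi(\tau;t,x),h(\phi(\tau;t,x))\bigr) = \mathcal{L}_{\tilde f} V(t,x,h(x)) \leq -k W(t,x)
\end{equation*}
for $x$ in some neighbourhood of the equilibrium. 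This gives the desired exponential dissipation of $W$.

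Second, I would verify that $W$ is positive definite locally. From the FLF bounds we have $c_1|h(x)|^p\leq W(t,x)\leq c_2|h(x)|^p$, and since $h$ is smooth with $h(0)=0$, a mean-value estimate gives the upper bound $|h(x)|\leq L_h|x|$ near the origin. The main obstacle is the lower bound: the bare condition \emph{$h(x)=0$ iff $x=0$} is not strong enough to produce $|h(x)|\gtrsim |x|^q$ (for instance, $h(x)=x^3$ satisfies the hypothesis but decays too fast at the origin), so one needs a mild nondegeneracy such as $Dh(0)$ invertible; with this extra regularity one recovers $k_1|x|\leq |h(x)|\leq k_2|x|$ on a neighbourhood, and standard exponential stability follows from the dissipation estimate above.

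Third, for the ``in particular'' claim, I would take $h=f$. The identity $[f,f]=0$ is automatic, and by hypothesis $f(0,t)=0$. Under ULIES with $f$ time-invariant, the linearization $Df(0)$ must be Hurwitz, hence invertible, so $|f(x)|$ is comparable to $|x|$ on a neighbourhood of the origin and the previous step applies with $h$ replaced by $f$. This recovers the Krasovskii candidate $W(t,x)=V(t,x,f(x))$ as a bona fide Lyapunov function. The pivotal technical point I expect to need care over is precisely the nondegeneracy of $h$ at the equilibrium discussed above, without which the passage from the FLF $V$ to a classical exponential Lyapunov function $W$ cannot be completed.
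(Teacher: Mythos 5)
Your proof is correct and, at its core, takes the same route as the paper's: the paper establishes $\dot W\le -kW$ by the coordinate identity $[f,h]=\frac{\partial f}{\partial x}h-\frac{\partial h}{\partial x}f=0$, which lets it replace $\frac{\partial V}{\partial\delta x}\frac{\partial h}{\partial x}f$ by $\frac{\partial V}{\partial\delta x}\frac{\partial f}{\partial x}h$ and then invoke the FLF decay inequality of Theorem \ref{thm: Forni} at $\delta x=h(x)$; your invocation of the Lie-transport lemma preceding Theorem \ref{thm: kraso} is just the intrinsic version of that same computation. Where you genuinely go beyond the paper is the positive-definiteness step, and your concern is well-founded: the paper's argument stops at the decay estimate and simply asserts that $W$ is a positive definite Lyapunov candidate, but the corollary's hypothesis ``$h=0$ iff $x=0$'' only gives $W(t,x)>0$ for $x\neq 0$, not the two-sided bound $k_1|x|^q\le|h(x)|\le k_2|x|^q$ that Theorem \ref{thm: kraso} explicitly assumes and that is needed for $\dot W\le -kW$ to translate into an exponential estimate on $|x(t)|$ (a smooth $h$ vanishing to infinite order at the origin defeats any such lower bound). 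Your repair --- requiring $Dh(0)$ to be invertible, and checking for the ``in particular'' clause that ULIES forces $Df(0)$ to be Hurwitz so that $|f(x)|$ is comparable to $|x|$ near the origin --- is exactly the right one, and it fills a detail that the paper's own proof leaves implicit.
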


\begin{proof}
The time derivative of $W(t,x)$ reads
\begin{align*}
 \dot{W}(t &,x)  =\dot{V}(t,x,h(x)) \\
& =\frac{\partial V(t,x,h(x))}{\partial t}+\frac{\partial V(t,x,h(x))}{
\partial x}f(x,t) \\
& \;\;+\frac{\partial V(t,x,h(x))}{\partial\delta x}\frac{\partial h(x)}{
\partial x}f(x,t) \\
& =\frac{\partial V(t,x,h(x))}{\partial t}+\frac{\partial V(t,x,h(x))}{
\partial x}f(x,t) \\
& \;\;+\frac{\partial V(t,x,h(x))}{\partial\delta x}\frac{\partial f(x,t)}{
\partial x}h(x) \\
& \leq-kV(t,x,h(x)) \\
&=-kW(t,x),
\end{align*}
where the third equality follows from the fact in Euclidean space,
\begin{equation*}
\lbrack f,h]=\frac{\partial f}{\partial x}h-\frac{\partial h}{\partial x}f. 
\end{equation*}
Thus we see the system is exponentially stable with Lyapunov function $W(t,x)
$.
\end{proof}

Theorem \ref{thm: kraso} recovers and extends the so called Krasovskii's
method \cite{khalil2002nonlinear}: if there exists two constant positive
definite matrices $P$ and $Q$ such that
\begin{equation}
P\frac{\partial f(x)}{\partial x}+\left[ \frac{\partial f(x)}{\partial x}
\right] ^{T} P \leq-Q,   \label{eq: Krasovskii}
\end{equation}
then $V(x)=f^{T}(x)Pf(x)$ can serve as a Lyapunov function for the system 
since $h$ can be taken as $f$. Clearly, if (\ref{eq:
Krasovskii}) is satisfied, $\delta^{T}xP\delta x$ is a FLF for the system.
Then the Krasovskii's method is a direct consequence of Corallary \ref{cor:
kraso}. We consider two examples.

\begin{example}
Consider the linear system $\dot{x}=Ax$. Suppose there exists a FLF $
V=\delta x^{T}P\delta x$, such that $A^{T}P+PA=-I.$
Then since $[Ax,x]=0$, Corallary \ref{cor: kraso} tells us that when
replacing $\delta x$ with $x$, $V$ becomes a Lyapunov function, i.e. $
W(x)=x^{T}Px$. Furthermore, $x^{T}B^{T}PBx$ is also a Lyapunov function as
long as $B$ commutes with $A$ since in this case $[Ax, Bx]= (BA- AB)x =0$.
\end{example}

\begin{example}
\label{exmp: matrix measure}We consider the case when the matrix measure of
the Jacobian $J(t,x)=\partial f(x,t)/\partial x$ is uniformly bounded. That
is
\begin{equation*}
\mu(J(t,x))\leq-c<0,\ \forall t\geq0,\forall x 
\end{equation*}
This is considered in for example \cite{aminzare2014contraction}.
The FLF can be chosen as $V(x,\delta x)=|\delta x|$, and the Lyapunov
function $W(t,x)=|f(x,t)|$.
Indeed, it can be readilty checked that
\begin{align*}
	\dot{W}(t &,x(t))=\lim_{h\rightarrow0+}\frac{|f(t,x+hf(x))|-|f(x,t)|}{ h}\\
	& =\lim_{h\rightarrow0+}\frac{1}{h}\left( \left\vert f(x,t)+h\frac{\partial
		f(t,\xi)}{\partial x}f(x,t)\right\vert -|f(x,t)|\right) \\
	& \leq\lim_{h\rightarrow0+}\frac{\left\vert I+hJ(t,\xi)\right\vert -1}{h}
		|f(x,t)| \\
	& =\mu(J(t,\xi))|f(x,t)| \\
	& \leq-cW(t,x(t)).
\end{align*}
We see that although $f$ is time dependent, $V(t,x,f(x,t))$ may also have
the possibility to be a Lyapunov funtion. This sugggests that other tools
are needed to analyze such situation.
\end{example}

\begin{remark}
We remark that the results obtained by F. Bullo \cite{bullo2019geometric}
and K. Kosaraju \cite{kosaraju2019differential} (when the input $u$ is $0$)
regarding Krasovskii's method are special cases of Corallary \ref{cor: kraso}
.
\end{remark}

\section{Conclusion\label{sec: conclusion}}

Based on the paper \cite{forni2013differential}, we have given further
geometric and Lyapunov characterizations of incremental stability by
studying the complete lift of the system. We have shown that contraction
analysis can be carried out in a coordinate free way. Two converse contraction
theorems on Finsler manifolds, namely, UIES (UIAS) implies the existence of a
FLF. This result also confirms the differential framework proposed by F. Forni
\textit{et al.} is appropriate for analyzing incremental stability. The third
contribution is the establishment of the connections between incremental
stability and stability (of an equilibrium), which rediscovers and extends the
classical Krasovskii's method for constructing Lyapunov functions. Further
research includes applications of the proposed theories to observer design on
manifolds.

\section{Acknowlegement}

We thank Dr. Antoine Chaillet, Dr. Romeo Ortega, Dr. Fulvio Forni and Dr.
John W. Simpson-Porco for fruitful discussions during the preparation of the
manuscript.

\bibliographystyle{IEEEtran}
\bibliography{IEEEabrv, GeometricControl}

\end{document}